\newcommand*\if@single[3]{%
  \setbox0\hbox{${\mathaccent"0362{#1}}^H$}%
  \setbox2\hbox{${\mathaccent"0362{\kern0pt#1}}^H$}%
  \ifdim\ht0=\ht2 #3\else #2\fi
  }
\newcommand*\rel@kern[1]{\kern#1\dimexpr\macc@kerna}
\newcommand*\widebar[1]{\@ifnextchar^{{\wide@bar{#1}{0}}}{\wide@bar{#1}{1}}}
\newcommand*\wide@bar[2]{\if@single{#1}{\wide@bar@{#1}{#2}{1}}{\wide@bar@{#1}{#2}{2}}}
\newcommand*\wide@bar@[3]{%
  \begingroup
  \def\mathaccent##1##2{%
%If there's more than a single symbol, use the first character instead (see below):
    \if#32 \let\macc@nucleus\first@char \fi
%Determine the italic correction:
    \setbox\z@\hbox{$\macc@style{\macc@nucleus}_{}$}%
    \setbox\tw@\hbox{$\macc@style{\macc@nucleus}{}_{}$}%
    \dimen@\wd\tw@
    \advance\dimen@-\wd\z@
%Now \dimen@ is the italic correction of the symbol.
    \divide\dimen@ 3
    \@tempdima\wd\tw@
    \advance\@tempdima-\scriptspace
%Now \@tempdima is the width of the symbol.
    \divide\@tempdima 10
    \advance\dimen@-\@tempdima
%Now \dimen@ = (italic correction / 3) - (Breite / 10)
    \ifdim\dimen@>\z@ \dimen@0pt\fi
%The bar will be shortened in the case \dimen@<0 !
    \rel@kern{0.6}\kern-\dimen@
    \if#31
      \overline{\rel@kern{-0.6}\kern\dimen@\macc@nucleus\rel@kern{0.4}\kern\dimen@}%
      \advance\dimen@0.4\dimexpr\macc@kerna
%Place the combined final kern (-\dimen@) if it is >0 or if a superscript follows:
      \let\final@kern#2%
      \ifdim\dimen@<\z@ \let\final@kern1\fi
      \if\final@kern1 \kern-\dimen@\fi
    \else
      \overline{\rel@kern{-0.6}\kern\dimen@#1}%
    \fi
  }%
  \macc@depth\@ne
  \let\math@bgroup\@empty \let\math@egroup\macc@set@skewchar
  \mathsurround\z@ \frozen@everymath{\mathgroup\macc@group\relax}%
  \macc@set@skewchar\relax
  \let\mathaccentV\macc@nested@a
%The following initialises \macc@kerna and calls \mathaccent:
  \if#31
    \macc@nested@a\relax111{#1}%
  \else
%If the argument consists of more than one symbol, and if the first token is
%a letter, use that letter for the computations:
    \def\gobble@till@marker##1\endmarker{}%
    \futurelet\first@char\gobble@till@marker#1\endmarker
    \ifcat\noexpand\first@char A\else
      \def\first@char{}%
    \fi
    \macc@nested@a\relax111{\first@char}%
  \fi
  \endgroup
}
\theoremstyle{plain}
\newtheorem{prop}{Proposition}
\newtheorem{lemm}[prop]{Lemma}
\newtheorem{theo}[prop]{Theorem}
\theoremstyle{definition}
\newtheorem{exam}{Example}
\newtheorem{defi}{Definition}
\newtheorem{assu}{Assumption}
\newtheorem{rema}{Remark}
\newcommand{\lamh}{\hat{\lambda}}
\newcommand{\convp}{\overset{p}{\to}}
\newcommand{\tV}{\widetilde{V}}
\newcommand{\EE}[2][]{\mathbb{E}_{#1}\left[#2\right]}
\newcommand{\argmax}{\operatorname{argmax}}
\newcommand{\argmin}{\operatorname{argmin}}
\newcommand{\PP}[2][]{\mathbb{P}_{#1}\left[#2\right]}
\newcommand{\simiid}{\,{\buildrel \text{iid} \over \sim\,}}
\newcommand\indep{\protect\mathpalette{\protect\independenT}{\perp}}
\def\independenT#1#2{\mathrel{\rlap{$#1#2$}\mkern2mu{#1#2}}}
\begin{document}

\title{Qini Curves for Multi-Armed Treatment Rules}

\author{
Erik Sverdrup\\
\footnotesize{\texttt{Monash University}}
\and 
Han Wu\\
\footnotesize{\texttt{Two Sigma}}
\and
Susan Athey\\
\footnotesize{\texttt{Stanford University}}
\and
Stefan Wager\\
\footnotesize{\texttt{Stanford University}}
}

%\date{}
\maketitle

\begin{abstract}
Qini curves have emerged as an attractive and popular approach for evaluating the benefit of data-driven targeting rules for treatment allocation. We propose a generalization of the Qini curve to multiple costly treatment arms that quantifies the value of optimally selecting among both units and treatment arms at different budget levels. We develop an efficient algorithm for computing these curves and propose bootstrap-based confidence intervals that are exact in large samples for any point on the curve. These confidence intervals can be used to conduct hypothesis tests comparing the value of treatment targeting using an optimal combination of arms with using just a subset of arms, or with a non-targeting assignment rule ignoring covariates, at different budget levels. We demonstrate the statistical performance in a simulation experiment and an application to treatment targeting for election turnout. 
\end{abstract}

\section{Introduction}
The Qini curve, initially proposed in the marketing literature \citep{radcliffe2007using}, plots the average policy effect of treating the units most responsive to the treatment as we vary the budget. We can then quantify the value of treatment targeting via cost-benefit evaluations undertaken at a series of distinct budget levels. The Qini curve has been adopted in a variety of practical applications to evaluate the empirical performance of treatment targeting rules \citep{albert2022commerce,belbahri2021qini, diemert2018large, uplift_review, rzepakowski2012decision, zhao2017uplift, zhao2019uplift}.

The theoretical properties of Qini-like metrics under a binary treatment, and extensions to area under the curve summaries, have recently received attention in the statistics literature by a number of authors, including \citet{imai2021experimental, imai2022statistical, sun2021treatment}, and \citet{yadlowsky2021evaluating}. These approaches consider the problem of targeting the assignment of a (possibly costly) binary intervention. In this paper, we explore the extension to scenarios where there are multiple treatment arms, and where the benefits and costs of assignment may vary across units. For example, a low-cost drug may be beneficial for a certain group of people, but a high-cost drug may be even more beneficial for a subset of these. Analyzing this setting through separate Qini curves for the two available arms can conceal important efficiency trade-offs. For a specific budget, the optimal policy may entail assigning different drugs to different people; a less expensive drug for one group and a costlier drug for another. Determining the optimal treatment assignment policy that maps individual characteristics to one of several treatment arms involves solving a constrained optimization problem.

We develop a theoretical and statistical framework to extend the Qini curve to the case where we have many mutually exclusive and costly treatments. We show that the Qini curve extended to multiple arms retains the desirable ratio-based interpretation of the Qini for a single treatment arm, where it is the \textit{incremental efficiency} of each arm that determines the optimal allocation. This means that it is not necessary to denominate treatment effects and costs on the same scale. An additional unit of budget is allocated to an arm and a set of targeted participants (defined by their characteristics) if the ratio of the benefits to the set of participants relative to the cost is greater than the corresponding ratio for any other arm and set of participants.    

\begin{figure}[t]
\centering
    \includegraphics[width=0.6\textwidth]{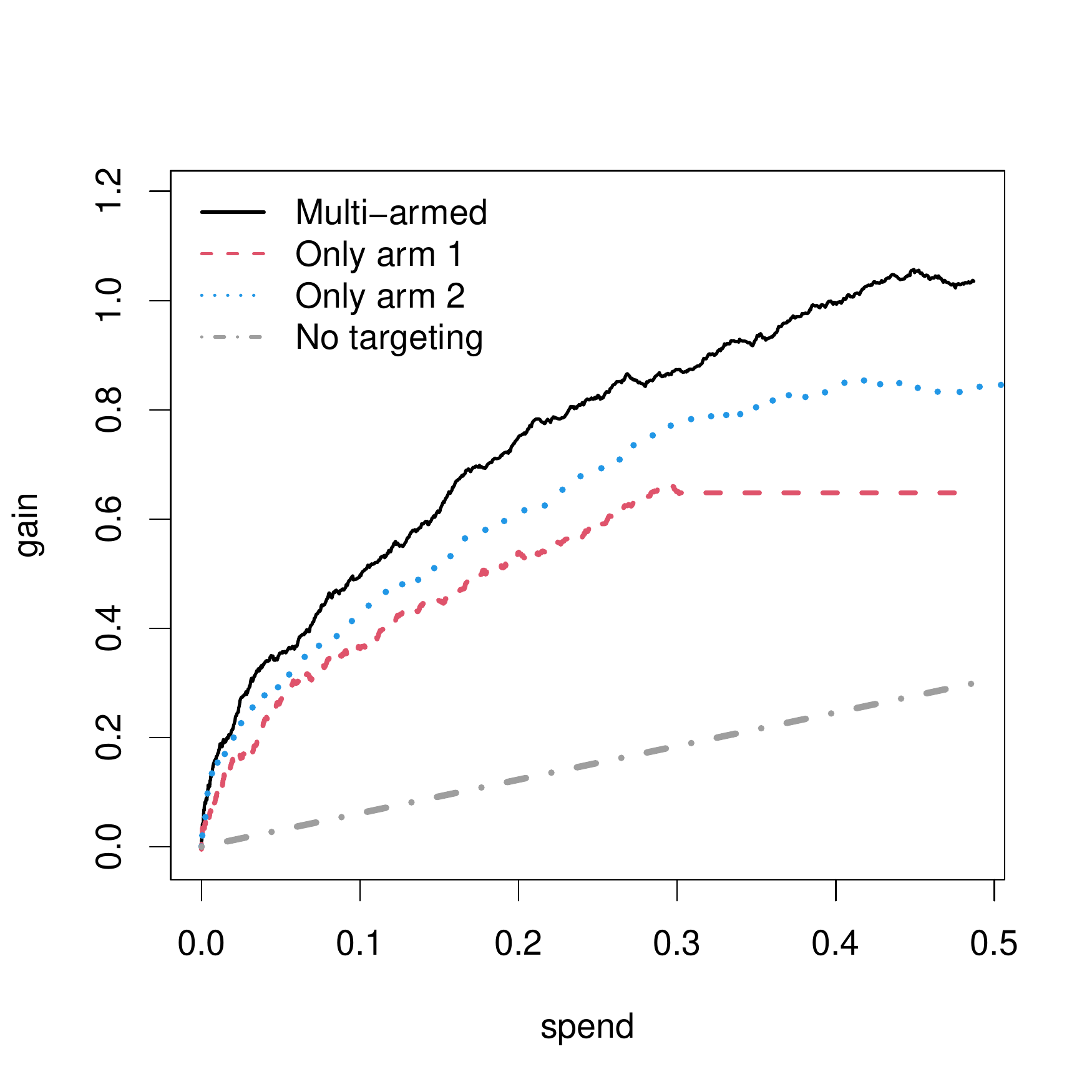}
\caption{Qini curves for single-armed treatment policies and a Qini curve for a multi-armed policy, using synthetic data described in Section \ref{sec:simulation}. The no-targeting line represents the value of randomly allocating units to arm 1 and traces out an average treatment effect. Arm 2 has a negative estimated average treatment effect and so its no-targeting line is omitted from the plot.}
\label{fig:fig1}
\end{figure}

To gain an intuition for the generalization of the Qini to multiple arms, recall that the Qini curve is a metric for evaluating a \emph{treatment rule} implied by a policy. With a single treatment arm, where for simplicity the cost of assignment is the same for each unit, the optimal policy is to allocate treatment in decreasing order of the conditional average treatment effect. Given estimates of these treatment effects, the Qini curve plots the value of assigning treatment to individuals as prioritized by their estimated treatment effects. Figure \ref{fig:fig1} shows Qini curves in a setting with two treatment arms. For example, if we can only use arm 1 and have a total budget of 0.2, then we can achieve a gain of 0.54; whereas if we can only use arm 2 the same budget yields an estimated gain of 0.60. Note that, once we pass a spend level of 0.3, the arm-1 Qini curve plateaus---this is because, once we have reached this spend level using arm 1, we are already giving treatment to all units believed to benefit from it, and so cannot achieve further gains via increased spending.

The single-armed curves in Figure \ref{fig:fig1} are straightforward to compute, as the underlying policies induce a priority rule that involves sorting units in order of the estimated conditional average treatment effect. Computing the optimal allocation for a multi-armed policy is more complicated, as it involves solving a constrained cost-benefit problem across many arms. We show that, even though the underlying multi-armed policies are more complicated, they still yield a tractable treatment rule that can be evaluated with Qini curves. The Qini curve for a multi-armed policy in Figure \ref{fig:fig1} highlights that since different arms can be better for different groups, targeting enables the arms to be assigned accounting for the cost-benefit analysis appropriate for distinct subgroups. For example, with a budget of 0.2, we can now achieve a gain of 0.75, which is better than what we could get with either arm alone.

Incorporating additional arms beyond two improves (i.e., raises) the Qini curve for two reasons. First, even in the absence of targeting, expanding the budget leads to greater use of arms that on average are less efficient (lower cost-benefit ratio) but are relatively beneficial. Second, targeting allows the identification of subgroups who particularly benefit from arms that might perform poorly on average, and thus not be prioritized in the absence of targeting.

We characterize the optimal multi-armed policy, showing that when expanding the budget, the optimal assignment selects units to receive more effective treatments according to where the incremental cost-benefit ratio is highest. We further show how, for given characteristics of a unit, the optimal policy can be characterized by a set of budget thresholds where the unit's assignment changes to a more beneficial but less efficient arm. We propose an efficient algorithm for estimating the solution path of the multi-armed policies that underlie the Qini curve by leveraging a connection to the linear multiple-choice knapsack problem. 

Our main theoretical result quantifies uncertainty for points on the Qini curve via a central limit theorem for the estimated multi-armed policy values. The result takes estimates of conditional average treatment effects and costs as given but accounts for the uncertainty from approximating the optimal allocation for each level of budget, and from estimating the policy value for that allocation. The central limit theorem can be used to, for example, quantify the value of employing more arms when targeting treatment.

An open-source software implementation, \texttt{maq} for \texttt{R} and \texttt{Python}, is available at \href{https://github.com/grf-labs/maq}{\textcolor{PineGreen}{github.com/grf-labs/maq}}.

\section{The Solution Path for Optimal Multi-Armed Treatment Assignment}\label{sec:optV}
To characterize the optimal multi-armed treatment allocation, we operate under the potential outcomes framework \citep{imbens2015causal}. We assume that we observe independent and identically distributed samples $(X_i, W_i, Y_i, C_i) \simiid P$ for $i=1,\ldots,n$, where $X_i \in \mathcal{X}$ denotes pre-treatment covariates, $W_i \in \{0, 1,\ldots, K\}$ denotes the treatment assignment ($W_i=0$ is the control group), $Y_i \in \mathbb{R}$ denotes the observed outcome, and $C_i \in \mathbb{R}$ denotes the incurred cost of assigning the unit the given treatment. We posit the potential outcomes $\{Y_i(0),\ldots, Y_i(K)\}$, $\{C_i(0), \ldots, C_i(K)\}$ and we assume $Y_i = Y_i(W_i)$  and $C_i = C_i(W_i)$ (SUTVA).

For the mutually exclusive treatment arms $k = 1,\ldots, K$, let $\tau(X_i)$ and $C(X_i)$ denote the vectors of conditional average treatment effects and cost contrasts, i.e. the $k$-th elements are:
\begin{align}
    \tau_k(x) &= \EE{Y_i(k) - Y_i(0) \mid X_i = x}, \\
    C_k(x) &= \EE{C_i(k) - C_i(0) \mid X_i = x}.
\end{align}
In our intended application, the role of arm $k=0$ is to provide us with an option to hold back treatment without incurring any costs. To accommodate this setup we need to assume that the cost of assigning arm this arm is 0 and the remaining arms have non-zero costs,
\begin{assu} \label{assu:cost}
$C_i(0) = 0$ and $C_i(k) \geq C_i(0)$ almost surely and $\EE{C_i(k)-C_i(0) \mid X_i = x} > 0$ for all $k = 1,\ldots,K$.
\end{assu}

Our goal is to gain insight into how much there is to gain from treatment targeting if treatment is assigned optimally. To do so, denote a policy by $\pi: \mathcal{X} \rightarrow \mathbb{R}^{K}$, a mapping from covariate $X_i$ to a treatment assignment. The policy $\pi(X_i)$ is a $K$-dimensional vector where the $k$-th element is equal to 1 if arm $k$ is assigned, and zero otherwise.\footnote{Fractional assignments between 0 and 1 are admissible and can be interpreted as probabilistic assignment between arms.} The associated value of this treatment assignment policy is the expected value:\footnote{In the policy learning literature it is sometimes common to define the value of a policy via potential outcome means \citep{athey2021policy}. Had we instead encoded $\pi$ to be a non-fractional policy taking values in the set $\{0, \, 1, \ldots, K\}$ then an equivalent formulation of the gain \eqref{eq:gaindef} would be $V(\pi) := \EE{Y(\pi(X_i)) - Y_i(0)}$.}
\begin{defi}
The expected gain (policy value) of a treatment assignment policy is the expected value it achieves in comparison to assigning each unit the control arm,
\begin{equation}\label{eq:gaindef}
       V(\pi) = \EE{\langle \pi(X_i), \tau(X_i) \rangle},
\end{equation}
where the notation $\langle a, b\rangle$ denotes an inner product between vectors $a$ and $b$.
\end{defi}
The cost of this policy is $\Psi(\pi) = \EE{\langle \pi(X_i), C(X_i)\rangle}$. Throughout the manuscript, we assume the cost function $C(\cdot)$ is known (i.e., the cost structure is a modeling decision). The optimal policy is the one that, for a given budget level, maximizes the expected gain while incurring costs less than or equal to the budget in expectation. Given a budget $B$, the optimal unrestricted policy $\pi^*_B$ that only depends on $X_i$ solves the following stochastic optimization problem:
\begin{equation}\label{eq:qiniopt}
    \pi^*_B = \argmax \{V(\pi): \Psi(\pi) \leq B\}.
\end{equation}
The asterisk is to emphasize that this policy has access to the population oracle $\tau(\cdot)$ function.
In the case of only a single treatment arm ($K=1$), but where each unit's cost may be different, \eqref{eq:qiniopt} is an instance of the fractional knapsack problem \citep{dantzig1957discrete} and the optimal policy induces an appealing treatment rule allocating treatment to units in decreasing order of the cost-benefit ratio $\EE{Y_i(1) - Y_i(0) \mid X_i=x} / \EE{C_i(1) - C_i(0) \mid X_i=x}$ until the budget runs out \citep{luedtke2016optimal, sun2021treatment}. The treatment allocation in this induced ranking constitutes the solution path over varying budget levels.

The multi-armed case ($K > 1$) is more complicated, as \eqref{eq:qiniopt} then belongs to the class of multiple-choice knapsack problems \citep{sinha1979multiple}, a type of optimization problem that involves filling a knapsack up to a capacity by selecting at most one item from a set of classes, where each item has an associated ``profit'' and ``weight''. In our formulation, the class is a unit and the item is a treatment arm with the profits and weights corresponding to the conditional average treatment effect and cost of the particular arm. The knapsack capacity is the budget constraint. Allowing for fractional treatment allocation reduces this problem to a linear program with $nK$ choice variables. Using the transformation principles presented in \citet{zemel1980linear}, it is possible to recast this into inducing a similar treatment priority rule, but where the priority is based on ``incremental'' cost-benefit ratios.

\subsection{Characterizing the Optimal Polices}\label{sec:characterization}
The idea behind characterization via incremental cost-benefit ratios is to recast the problem of choosing between both units and treatment arms into thresholding a suitable priority rule that captures both which unit and which arm is optimal to assign at a given budget level. For any given unit $i$, the only treatment arms that will be active in the optimal solution are the ones that lie on the convex hull of the cost-reward plane \citep[Proposition 2]{sinha1979multiple}. For any $x \in \mathcal{X}$, define the convex hull formed by the points $(C_k(x), \tau_k(x)), \, k = 0,\ldots,K$ to be a set of $m_x$ points with the ordering $k_1(x), \ldots, k_{m_x}(x)$ such that 
\begin{equation*}
    0 = C_{k_1(x)}(x) < \cdots  < C_{k_{m_x}(x)}(x)
\end{equation*}
\begin{equation*}
    0 = \tau_{k_{1}(x)}(x) < \cdots  < \tau_{k_{m_x}(x)}(x)
\end{equation*}
\begin{equation*}
    \rho_{k_{1}(x)}(x) > \cdots \rho_{k_{m_x}(x)}(x) > 0
\end{equation*}
where we define the incremental cost-benefit ratio as
\begin{equation} \label{eq:rho_def}
    \rho_{k_{j}(x)}(x) := \frac{\tau_{k_{j}(x)}(x)-\tau_{k_{j-1}(x)}(x)}{C_{k_{j}(x)}(x) - C_{k_{j-1}(x)}(x)}
\end{equation}
and we let $\rho_0(x) = \infty$ and $\rho_{k}(x) = -\infty$ if $k \notin \{k_1(x),\ldots,k_{m_x}(x)\}$. 

\begin{figure}[t]
\centering
    \includegraphics[width=0.7\textwidth]{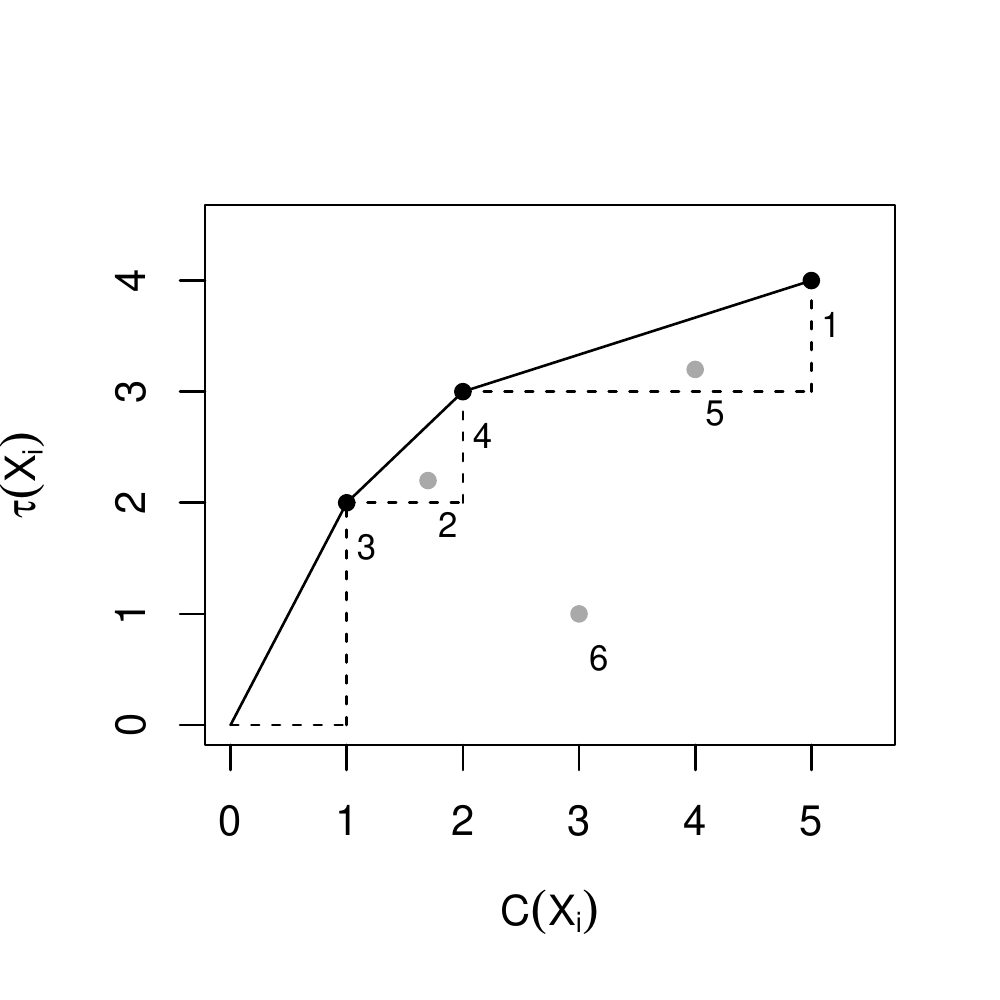}
\caption{Treatment arms (numbers listed beneath circles) on the ($C$, $\tau$) plane. The points $(0,0), (1,2), (2,3), (5,4)$ are on the convex hull and $k_1(x) = 0, k_2(x) = 3, k_3(x) = 4, k_4(x) = 1$. And by our definition, $\rho_0(x) = \infty, \rho_1(x) = \frac{1}{3}, \rho_4(x) = 1, \rho_3(x) = 2, \rho_k(x) = -\infty$ for $k = 2,5,6$.}
\label{fig:lp_dominated}
\end{figure}

Figure \ref{fig:lp_dominated} illustrates the case of optimally assigning treatment for a single unit $i$. If we have an available budget of 1, it would be optimal to assign arm 3 to the $i$-th unit. If we increase the available budget to 2, then we have two choices: upgrade to either arm 2 or 4. Since arm 2 lies outside the convex hull, it is strictly sub-optimal to assign this arm, and the optimal assignment is arm 4. For the optimal policy, we are faced with a distribution of convex hulls, one for each unit, and have to decide whether to assign a new unit a treatment or upgrade an existing unit to a costlier arm. In Theorem \ref{theo:opt_policy}, we show that the insight from \citet{zemel1980linear} carries over to the stochastic setting: What matters in each of these convex hulls are the slopes of the tangent lines between arms, i.e., the incremental cost-benefit ratio \eqref{eq:rho_def}. For a given budget level, when choosing between selecting an arm for unit $i$ or $j$, the (unit, arm) with the largest tangent slope is optimal. The following theorem formalizes this intuition by characterizing the optimal stochastic policy at a given budget level $B$, in terms of thresholding of the distribution of incremental cost-benefit ratios.

\begin{defi}
\label{defi:multi_thresh}
Given cost-benefit ratios $\rho$ as in \eqref{eq:rho_def}, a threshold $\lambda \geq 0$ and an
interpolation value $c \in [0, \, 1)$, we define a multi-armed thresholding rule as
\begin{equation}
T_{k_j(x)}(x; \, \rho, \, \lambda, \, c) := \begin{cases}
1 & \text{if }  \ \rho_{k_j(x)}(x) > \lambda > \rho_{k_{j+1}(x)}(x), \\
c & \text{if }  \ \rho_{k_j(x)}(x) = \lambda, \\
1-c & \text{if } \ \rho_{k_{j+1}(x)}(x) = \lambda, \\
0 & \text{otherwise}.
\end{cases}
\end{equation}
\end{defi}

\begin{theo} \label{theo:opt_policy}
Under Assumption \ref{assu:cost}, there exists an optimal (stochastic) policy $\pi_B^{*}$
that takes the form of a multi-armed thresholding rule following Definition \ref{defi:multi_thresh}, i.e.,
there exist constants $\lambda_B \geq 0$ and $c_B \in [0, \, 1)$ such that
\begin{equation} \label{eq:pi_opt}
\pi^*_{B}(x) = T(x; \, \rho, \, \lambda_B, \, c_B).
\end{equation}
\end{theo}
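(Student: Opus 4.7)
The plan is to exploit the fact that \eqref{eq:qiniopt} is an (infinite-dimensional) linear program over policies $\pi(x) \in \Delta := \{p \in \mathbb{R}^K_+ : \sum_k p_k \leq 1\}$ and solve it via Lagrangian duality, with the budget constraint being the only ``coupling'' constraint across $x$. For $\lambda \geq 0$ define the Lagrangian
\begin{equation*}
L(\pi, \lambda) \;=\; V(\pi) - \lambda \bigl(\Psi(\pi) - B\bigr) \;=\; \mathbb{E}\bigl[\langle \pi(X_i), \tau(X_i) - \lambda\, C(X_i)\rangle\bigr] + \lambda B.
\end{equation*}
The key observation is that the pointwise maximization $\max_{p \in \Delta} \langle p, \tau(x) - \lambda C(x)\rangle$ decouples across $x$ and reduces, at each $x$, to a linear program on the simplex whose optimum lies at a vertex, i.e., either at $p = 0$ (assign control) or at $p = e_k$ for some $k$ with $\tau_k(x) - \lambda C_k(x) \geq 0$ maximal.

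The first step is to translate this vertex condition into the thresholding rule. For each $x$, the map $k \mapsto \tau_k(x) - \lambda C_k(x)$ is maximized at some arm on the upper-left convex hull of $\{(C_k(x), \tau_k(x))\}_{k=0}^K$, because any off-hull arm is dominated. Along the hull, the ordering $k_1(x), \ldots, k_{m_x}(x)$ gives
\begin{equation*}
\bigl(\tau_{k_j}(x) - \lambda C_{k_j}(x)\bigr) - \bigl(\tau_{k_{j-1}}(x) - \lambda C_{k_{j-1}}(x)\bigr) \;=\; \bigl(C_{k_j}(x) - C_{k_{j-1}}(x)\bigr)\bigl(\rho_{k_j}(x) - \lambda\bigr),
\end{equation*}
so that stepping up from $k_{j-1}$ to $k_j$ improves the pointwise objective iff $\rho_{k_j}(x) > \lambda$, and is indifferent iff $\rho_{k_j}(x) = \lambda$. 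Using $\rho_0(x) = \infty$ and $\rho_{k}(x) = -\infty$ off the hull, the unique maximizer when $\lambda$ avoids every $\rho_{k_j}(x)$ is precisely the arm $k_j(x)$ with $\rho_{k_j(x)}(x) > \lambda > \rho_{k_{j+1}(x)}(x)$; when $\lambda = \rho_{k_j(x)}(x)$ the pair of adjacent hull arms is indifferent, and any convex combination with weights $(c, 1-c)$ is an optimizer. This shows that for every $\lambda \geq 0$ and every $c \in [0,1)$, the policy $T(\cdot;\rho,\lambda,c)$ attains $\sup_\pi L(\pi,\lambda)$.

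The second step is to pick the pair $(\lambda_B, c_B)$ that also satisfies primal feasibility. Define $g(\lambda, c) := \Psi(T(\cdot; \rho, \lambda, c))$. As $\lambda$ decreases from $+\infty$ to $0$, $g(\lambda, 0)$ is nondecreasing in $\lambda \mapsto -\lambda$, right-continuous in $\lambda$, with jumps exactly at the atoms of the distribution of $\rho$; the interpolation parameter $c \in [0,1)$ continuously fills the jumps. Hence either (i) $g(0, 0) \leq B$, in which case the budget is slack and we take $\lambda_B = 0$, $c_B = 0$ (Assumption \ref{assu:cost} ensures all hull costs are nonnegative so the unconstrained Lagrangian maximizer is feasible), or (ii) there exists a unique $\lambda_B > 0$ and a $c_B \in [0,1)$ for which $g(\lambda_B, c_B) = B$. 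Either way, $\pi^\dagger := T(\cdot;\rho,\lambda_B, c_B)$ is primal feasible, and because it maximizes $L(\cdot, \lambda_B)$ while also satisfying $\lambda_B(\Psi(\pi^\dagger) - B) = 0$ (complementary slackness), it is a saddle point of the Lagrangian. Standard weak duality then gives $V(\pi^\dagger) \geq V(\pi)$ for every feasible $\pi$, proving $\pi^*_B = \pi^\dagger$ is optimal and of the claimed form.

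The main obstacle is the bookkeeping at mass points of the law of $\rho$: when the budget is exhausted exactly ``in the middle'' of an atom, one must verify that the family $\{T(\cdot; \rho, \lambda, c) : c \in [0,1)\}$ actually spans the entire jump of $g(\lambda, \cdot)$ at that atom, so that some $c_B$ exists. Measurability of the induced $\pi^*_B$ is a minor but necessary check, handled by noting that $\rho_{k_j(X)}(X)$ is a measurable function of $X$ and the thresholding rule is a Borel function of those ratios.
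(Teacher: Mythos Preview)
Your proposal is correct and, at its core, relies on the same inequality as the paper's proof, but you package it differently. The paper proceeds by a direct comparison: for an arbitrary feasible competitor $\pi'$, it writes the gain difference as a telescoping (Abel) sum,
\[
\sum_{j=1}^{K}\bigl(\pi_{k_j(x)}(x)-\pi'_{k_j(x)}(x)\bigr)\tau_{k_j(x)}(x)
=\sum_{l=1}^{K}\rho_{k_l(x)}(x)\bigl(C_{k_l(x)}(x)-C_{k_{l-1}(x)}(x)\bigr)\sum_{j\geq l}\bigl(\pi_{k_j(x)}(x)-\pi'_{k_j(x)}(x)\bigr),
\]
and then uses the sign structure of the thresholding rule to replace each $\rho_{k_l(x)}(x)$ by $\lambda_B$, obtaining $V(\pi^*_B)-V(\pi')\geq \lambda_B\bigl(\Psi(\pi^*_B)-\Psi(\pi')\bigr)$, from which optimality follows by case analysis on $\lambda_B>0$ versus $\lambda_B=0$.

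Your Lagrangian argument establishes the same inequality in the form ``$T(\cdot;\rho,\lambda_B,c_B)$ maximizes $L(\cdot,\lambda_B)$'', and then invokes complementary slackness plus weak duality instead of the case split. The increment identity you write, $(\tau_{k_j}-\lambda C_{k_j})-(\tau_{k_{j-1}}-\lambda C_{k_{j-1}})=(C_{k_j}-C_{k_{j-1}})(\rho_{k_j}-\lambda)$, is precisely the pointwise version of the paper's Abel summation. What your framing buys is a cleaner narrative (choose the multiplier, verify the saddle point) and a transparent handling of the slack-budget case via $\lambda_B=0$; what the paper's explicit computation buys is that it never needs to name ``duality'' or argue about the existence of a saddle point, and it makes the role of the convex-hull ordering more visible in the algebra. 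Your remarks about atoms of the law of $\rho$ and measurability are valid side conditions; the paper handles the former implicitly through its definition of $c$ in \eqref{eq:c_B} and does not discuss the latter.
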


For generic distributions where $X$ has continuous support, $\mathbb{P}[\rho_{k_j(x)}(x) = \lambda_B] = 0$ for all $\lambda_B > 0$, and so the optimal policy will almost surely be integer-valued.

\section{The Qini Curve for Multi-Armed Policies}\label{sec:qinicurve}

Section \ref{sec:optV} provides a characterization that maps a budget $B$, a cost function $C(\cdot)$ and the population function $\tau(\cdot)$, to an optimal policy $\pi_B^*(X_i)$. Given an independent and identically distributed random sample from this population, we can obtain, through various estimation methods, estimates $\hat \tau(\cdot)$ of the function $\tau(\cdot)$ (see Section \ref{sec:simulation} for an example). We refer to the sample used to obtain these estimates as the \emph{training sample}. These estimates induce a policy:
\begin{defi} \label{def:induced_policy}
Let $\hat \tau(\cdot)$ be the estimates of the conditional average treatment effect function obtained on a training sample. The induced policy $\pi_B$ is the policy that solves
\begin{equation}
    \pi_B = \argmax_{\pi} \left\{ \EE{\langle \pi(X_i), \hat \tau(X_i) \rangle}: \EE{\langle \pi(X_i), C(X_i) \rangle} \leq B \right\}, \label{eq:population_induced_policy}
\end{equation}
i.e., we are solving \eqref{eq:qiniopt} but replacing the population quantity $\tau(\cdot)$ with the estimated function $\hat \tau(\cdot)$.
\end{defi}
As a metric to evaluate treatment allocation according to an induced policy, we define the Qini curve:
\begin{defi} \label{def:qinicurve}
Given a family of policies $\pi_B$ indexed by $(\hat \tau, C$), the Qini curve is the curve that plots the function $Q(B) = V(\pi_B), \, B \in (0, B_{max}]$.
\end{defi}

The challenge now is, once we have a \emph{test sample} of independent and identically distributed random sample from the population, how do we form estimates of $Q(B)$? To keep concepts clear we define the empirical induced policy on the test set:

\begin{defi} \label{def:induced_test_policy}
Consider $n$ independently and identically distributed test samples from the population. Let $\hat \tau(\cdot)$ be the estimate of the conditional average treatment effect function obtained from a training sample. The test set empirical induced policy $\hat \pi_B$ is the policy that solves
\begin{equation}
    \hat \pi_B = \argmax_{\pi} \left\{\frac{1}{n} \sum_{i=1}^{n} \langle \pi(X_i), \hat \tau(X_i) \rangle: \frac{1}{n} \sum_{i=1}^{n} \langle \pi(X_i), C(X_i) \rangle \leq B \right\}, \label{eq:test_induced_policy}
\end{equation}
i.e., we are solving \eqref{eq:population_induced_policy} over an empirical test sample indexed by units $i=1\ldots n$.
\end{defi}

In order to form an estimate of $Q(B)$ on a test sample, there are three subsequent challenges we need to address: how to handle the budget constraint, how to efficiently express $\hat \pi_B$, and finally, how to estimate the policy value of $\pi_B$. The first issue, we address by satisfying the budget in expectation on the test set as in Definition \ref{def:induced_test_policy}.

\paragraph{Expressing $\hat \pi_B$ on the test set.} The optimization problem in \eqref{eq:test_induced_policy} has a linear program formulation that takes the following form,
\begin{equation}
    \begin{aligned}\label{eqn:naiveLP}
        \max_{\pi} \quad & \frac{1}{n}\sum_{i=1}^{n} \sum_{k=1}^{K} \pi_k(X_i) \hat \tau_k(X_i)\\ 
        \textrm{s.t.} \quad & \frac{1}{n}\sum_{i=1}^{n} \sum_{k=1}^{K} \pi_k(X_i) C_k(X_i) \leq B, \\
        &\sum_{k=1}^{K} \pi_k(X_i) \leq 1, \, i = 1 \ldots n, \\
        &\pi_k(X_i) \geq 0,~ k=1\ldots K,~ i = 1 \ldots n.
    \end{aligned}
\end{equation}
The direct approach of solving \eqref{eqn:naiveLP} via generic LP-solvers is computationally infeasible as this would involve computing a large collection of linear programs with $nK$ choice variables, one for each budget constraint $B \in (0, B_{max}]$.

A more feasible approach is to instead directly compute the path of solutions $\{\hat \pi_B\}_{B \rightarrow 0}^{B_{max}}$ via an algorithm tailored to the structure \eqref{eqn:naiveLP} embeds. To this end, the characterization of the optimal policy in Theorem \ref{theo:opt_policy} as a thresholding rule of incremental cost-benefit ratios $\rho$ is promising as it suggests the problem can be reduced to a single-dimensional fractional knapsack problem (with some additional bookkeeping). This is exactly the approach taken by \citet{zemel1980linear}, to solve \eqref{eqn:naiveLP} via sorting the incremental cost-benefit ratios \citep[Chapter~11]{kellerer2004multidimensional}.\footnote{Faster algorithms for the LP-relaxation of the multiple-choice knapsack problem exits, \citep{dyer1984n, zemel1984n} derive linear-time solutions for a fixed budget level, but these do not readily generalize to a path algorithm.} Figure \ref{fig:figure_lambda} illustrates how $\rho$ determines a solution. The vertical axis shows the incremental cost-benefit ratios for each unit's arm on the convex hull (with units indexed by the horizontal axis). A solution to \eqref{eqn:naiveLP} is given by a particular threshold $\lambda_B$ on the vertical axis and determines the optimal allocation through a planar separation of unit-arm pairs.

A limitation of the algorithm in \citet{zemel1980linear} is that it solves \eqref{eqn:naiveLP} at only a single budget level $B$, as determined by a single planar separation. In order to adapt this algorithm to deliver a path of solutions over budget levels, we can make use of a priority queue ordered by decreasing $\rho$ that acts as a construction that keeps track of which (unit, arm) enters the active set of the solution path, as we lower $\lambda_B$ in Figure \ref{fig:figure_lambda} in accordance with the budget $B$ we are tracing out.

\begin{figure}[t]
    \centering
    \begin{subfigure}[b]{0.475\textwidth}
        \centering
        \includegraphics[width=\textwidth]{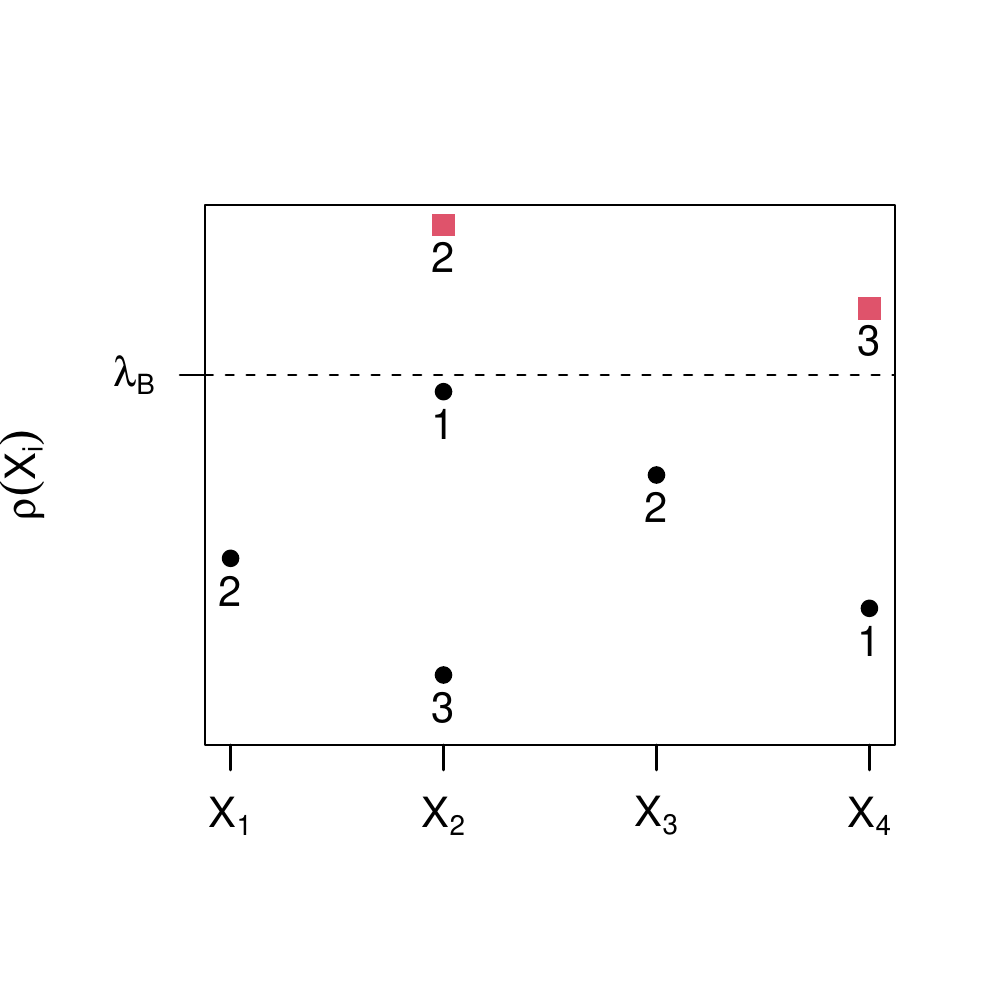}
        \caption[]
        {{At the current $\lambda_B$ the optimal solution is to assign arm 2 to unit $X_2$ and arm 3 to $X_4$ (squares).}} 
        \label{fig:figure_lambda_a}
    \end{subfigure}
    \hfill
    \begin{subfigure}[b]{0.475\textwidth}  
        \centering 
        \includegraphics[width=\textwidth]{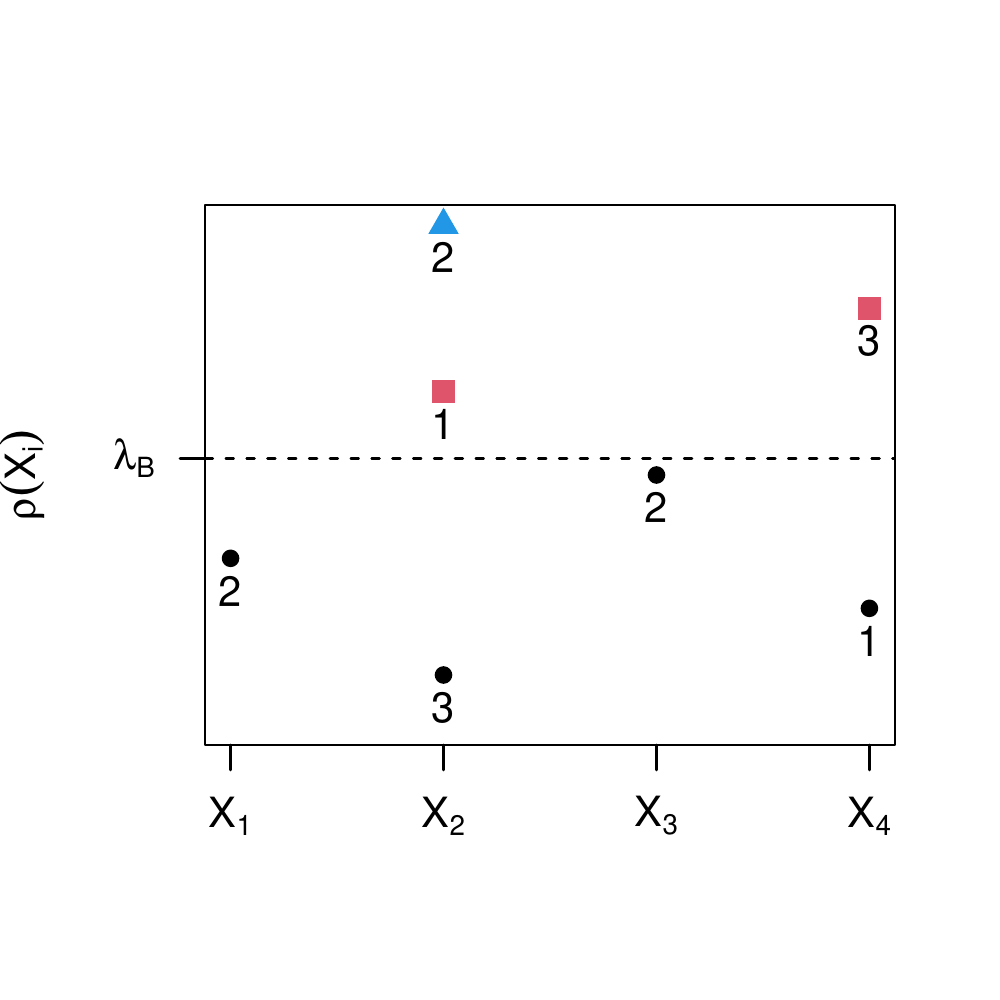}
        \caption[]
        {{At a higher budget level arm 1 is assigned unit $X_2$.\\}}    
        \label{fig:figure_lambda_b}
    \end{subfigure}
    \caption{Illustration of how for 4 example units $X_i$ the incremental cost-benefit ratios $\rho$ trace out a solution path over increasing budget levels by decreasing $\lambda_B$. Squares are active arms in the current solution, circles are inactive arms waiting to be assigned, and triangles are previous arm assignments at past budget constraints. Recall that $\rho_{k_j(x)}(x)$ measures the incremental cost-benefit ratio of upgrading a unit with covariates $x$ to arm $j$ from the previous arm along its convex hull \eqref{eq:rho_def}.}\label{fig:figure_lambda}
\end{figure}

\paragraph{Estimating the value of $\pi_B$.} Now that we have a promising strategy to obtain the estimated path $\{\hat \pi_B\}_{B \rightarrow 0}^{B_{max}}$, how do we estimate its value? We show in Section \ref{sec:inference} that the approximation error of the empirical optimization in \eqref{eq:test_induced_policy} is asymptotically linear with zero means. This means we can leverage standard policy evaluation arguments for this component. Thus, with a suitable construction $\widehat \Gamma_i$ that satisfies $\EE{\widehat \Gamma_i \mid X_i} \approx \tau(X_i)$ policy evaluation arguments motivates forming an estimate of $Q(B)$ with the plug-in construction\footnote{Note that we are using the empirical induced policy $\hat\pi_B$ (Definition \ref{def:induced_test_policy}) to estimate the value of the population induced policy $\pi_B$ (Definition \ref{def:induced_policy}). Theorem \ref{theo:asymp_linear_reward} verifies the validity of this approach.}

\begin{equation}\label{eq:qiniplugin}
\widehat Q(B) = \widehat V(\pi_B) = \frac{1}{n} \sum_{i=1}^{n}\langle \hat \pi_B(X_i), \widehat \Gamma_i \rangle,
\end{equation}
where $\widehat \Gamma_i$ could be obtained with, in the case of known treatment randomization probabilities (as in a randomized controlled trial), inverse propensity weighting \citep{horvitz1952generalization}:
\begin{equation}\label{eq:IPW}
    \widehat \Gamma_{i,k} = \frac{\mathbf{1}(W_i=k)Y_i}{\PP{W_i=k}} - \frac{\mathbf{1}(W_i=0)Y_i}{\PP{W_i=0}}.
\end{equation}
If the treatment randomization probabilities are unknown (as would be the case in an observational setting under unconfoundedness), then a more robust alternative to forming \eqref{eq:IPW} via plug-in estimates of the propensity score, is to rely on augmented inverse propensity weighting \citep{robins1994estimation}. This approach relies on nuisance estimates in the form of propensity scores $e_k(x) = \PP{W_i = k \mid X_i = x}$ and conditional response surfaces $\mu_{k}(x) = \EE{Y_i \mid X_i = x, W_i = k}$. In order to construct this score, these components need to be estimated on the test set data.

To ensure that these estimated components are independent of the outcome for each unit, a popular approach is to employ cross-fitting \citep{dml, schick1986asymptotically} where the $i$-th unit's estimate is obtained without using that unit for estimation. The multi-armed score then takes the following form \citep{robins1994estimation, zhou_multi_action}, with $k$-th entry equal to
\begin{equation}\label{eq:drAIPW}
\widehat \Gamma_{i, k} =
{\hat \mu^{-q(i)}}_k(X_i) - {\hat \mu^{-q(i)}}_0(X_i) +
 \left( \frac{\mathbf{1}(W_i=k)}{\hat e^{-q(i)}_{k}(X_i)} -
\frac{\mathbf{1}(W_i=0)}{\hat e^{-q(i)}_{0}(X_i)} \right) (Y_i - {\hat \mu^{-q(i)}}_{W_i}(X_i)),
\end{equation}
where the super script $-q(i)$ denotes fitting using the data excluding the fold $X_i$ belongs to. This approach for evaluation can yield an efficiency gain over inverse-propensity weighting (see \citet[Section 2.3]{yadlowsky2021evaluating} for a discussion).

\paragraph{Computing the solution path and values.} With all the pieces needed to estimate $Q(B)$ in place, Algorithm \ref{alg:algopath_full} outlines pseudo-code for all the components needed to compute the Qini curve for a multi-armed policy, starting with estimating conditional average treatment effects on a training set. With these, a chosen cost function, and suitable evaluation scores in place, Algorithm \ref{alg:algopath} formalizes the intuition behind Figure \ref{fig:figure_lambda} with pseudo-code for computing the induced multi-armed policy and value up to some maximum budget level $B_{max}$.

\begin{algorithm}[p]
    \caption{Estimate the Qini for a multi-armed policy. With all test set estimates constructed the solution path is computed with \texttt{ComputePath} described in Algorithm \ref{alg:algopath}.}
    \label{alg:algopath_full}
    \SetKwInOut{Input}{Input}
    \SetKwInOut{Output}{Output}
    \Input{
    Set of training samples $\mathcal{S}_{train} = \{X_i, W_i, Y_i\}_{i=1}^{n_{train}}$,\\
    ~Set of test samples $\mathcal{S}_{test} = \{X_i, W_i, Y_i\}_{i=1}^{n}$,\\
    ~Cost function $C(\cdot)$,\\
    ~The maximum budget per unit, $B_{max}$.}
    \Output{~The estimated Qini curve $\widehat Q(B)$.}

    {\textbf{1)} Estimate $\hat \tau(\cdot)$ on the training set $\mathcal{S}_{train}$.}\\
    {~$\hat \tau(\cdot) \leftarrow$ \texttt{EstimateCATE}$\left(\mathcal{S}_{train}\right)$}\\
    % \texttt{\\}

    {\textbf{2)} Predict $\hat \tau$ on the test set $\mathcal{S}_{test}$.}\\
    {~$\hat \tau(X_i) \leftarrow$ \texttt{PredictCATE}$\left(X_i \in \mathcal{S}_{test}\right)$}\\
    % \texttt{\\}

    {\textbf{3)} Form evaluation scores $\widehat \Gamma$ on the test set $\mathcal{S}_{test}$.}\\
    ~\If{randomization probabilities are known}{
    {Construct the scores according to inverse-propensity weighting \eqref{eq:IPW}.}\\
    {~$\widehat \Gamma_{i,k} = \dfrac{\mathbf{1}(W_i=k)Y_i}{\PP{W_i=k}} - \dfrac{\mathbf{1}(W_i=0)Y_i}{\PP{W_i=0}}$}
    }
    \Else{
    {Form cross-fit nuisance estimates.}\\
    {~$\hat \mu^{-q(i)}(X_i) \leftarrow$ \texttt{EstimateCrossFitResponses}$\left(\mathcal{S}_{test}\right)$}\\
    {~$\hat e^{-q(i)}(X_i) \leftarrow$ \texttt{EstimateCrossFitPropensities}$\left(\mathcal{S}_{test}\right)$}\\
    {Construct the scores according to augmented inverse-propensity weighting \eqref{eq:drAIPW}.}\\
    {~$\widehat \Gamma_{i,k} =
    {\hat \mu^{-q(i)}}_k(X_i) - {\hat \mu^{-q(i)}}_0(X_i) +
     \left( \dfrac{\mathbf{1}(W_i=k)}{\hat e^{-q(i)}_{k}(X_i)} -
    \dfrac{\mathbf{1}(W_i=0)}{\hat e^{-q(i)}_{0}(X_i)} \right) (Y_i - {\hat \mu^{-q(i)}}_{W_i}(X_i))$}
    }

    {\textbf{4)} Estimate the induced policy path and value on $\mathcal{S}_{test}$ using Algorithm \ref{alg:algopath}.}\\
    {~~$\{\widehat V(\pi_B),~ \widehat \Psi(\pi_B)\}_{B\rightarrow 0}^{B_{max}} \leftarrow$ \texttt{ComputePath}$\left(\hat \tau(X_i),~ C(X_i),~ \widehat \Gamma_i,~ B_{max},~ i=1,\ldots,n\right)$}\\

    \Return $\{\widehat V(\pi_B),~ \widehat \Psi(\pi_B)\}_{B \rightarrow 0}^{B_{max}}$
\end{algorithm}

\begin{algorithm}[p]
    \caption{(\texttt{ComputePath}) Compute the multi-armed policy solution path. Time complexity: $O(nK \log {nK})$, where $n$ is the number of test samples and $K$ is the number of treatment arms.}
    \label{alg:algopath}
    \SetKwInOut{Input}{Input}
    \SetKwInOut{Output}{Output}
    \Input{
    Test set treatment effect estimates $\{\hat \tau(X_i)\}_{i=1}^{n}$,\\
    ~Test set costs $\{C(X_i)\}_{i=1}^{n}$,\\
    ~Test set evaluation scores $\{\widehat \Gamma_i\}_{i=1}^{n}$,\\
    ~Maximum budget per unit, $B_{max}$.}
    \Output{~A vector of gain estimates over increasing spend levels up to $B_{max}$.} 
    \For{all test samples $x$} {
        {Compute the set of arms on the convex hull.}\\
        {$\{k_1(x), \ldots, k_{m_x}(x)\} \leftarrow$ \texttt{ComputeConvexHull}$\left(\hat \tau(X_i),~ C(X_i),~ X_i = x\right)$} \algorithmiccomment{See \ref{sec:appendix_cvx_hull}.}\\
    }
    
    {$gain \leftarrow \varnothing$}\algorithmiccomment{Initialize gain $\widehat V(\cdot)$ and spend  $\widehat \Psi(\cdot)$ to empty vectors.}\\
    {$spend \leftarrow \varnothing$}\\
    {$pqueue  \leftarrow \text{PriorityQueue()} $} %\algorithmiccomment{Initialize empty priority queue.}
    
    \For{all test samples $x$} {
        {$\hat \rho(x) \leftarrow \dfrac{\hat \tau_{k_{1}(x)}(x)}{C_{k_{1}(x)}(x)}$ }\algorithmiccomment{Compute the efficiency of initial arm on convex hull.}\\
        {$pqueue.\text{add}$($(x, k_1(x))$ with priority $\hat \rho(x)$)} \algorithmiccomment{Enqueue each unit's initial arm.}\\
    }
    \While{current $spend$ $<$ $B_{max}$ and $pqueue$.size() $> 0$}{
        {$(x, k_j(x))  \leftarrow   pqueue.\text{pop}()$}\algorithmiccomment{Retrieve unit and arm on top of queue.}\\
        \If{already assigned an arm to unit $x$}{
            {Subtract previous arm's cost and gain from current $spend$ and $gain$.}
        }
        {Allocate arm $k_j(x)$ to unit $x$, record gain and pay for it.}\\
        {$gain$.append$(\text{current}~ gain + \widehat \Gamma_{k_{j}(x)}(x)/n)$ }\\
        {$spend$.append$(\text{current}~ spend + C_{k_{j}(x)}(x)/n)$}\\ 

        \If{current $spend$ $> B_{max}$}{
            {Perform fractional adjustment for unit $x$.} \algorithmiccomment{Given by the constant $c_B$ in \eqref{theo:opt_policy}}.\\
            {\textbf{break}}
        }
        
        \If{there remain arms on convex hull for unit $x$}{
            {$k_{j+1}(x) \leftarrow$ next arm on the convex hull}\\
            {$\hat \rho(x) = \dfrac{\hat \tau_{k_{j+1}(x)}(x)-\hat \tau_{k_{j}(x)}(x)}{C_{k_{j+1}(x)}(x) - C_{k_{j}(x)}(x)}$}\algorithmiccomment{Compute the incremental efficiency.}\\
            {$pqueue.\text{add}$($(x, k_{j+1}(x))$ with priority $\hat \rho(x)$)} \algorithmiccomment{Enqueue the next arm.}\\
        }
    }
    \Return $\{gain, \, spend\}$
\end{algorithm}

After a reduction to convex hulls, Algorithm \ref{alg:algopath} starts by adding each unit's first arm on the convex hull to a priority queue ordered by decreasing estimates $\hat \rho$ of the cost-benefit ratios. The first unit assigned is the unit on top of this queue (top square in Figure \ref{fig:figure_lambda_a}). If this unit has remaining arms on its convex hull (i.e., there are arms below the unit's initial allocation in Figure \ref{fig:figure_lambda}), then this subsequent arm is added to the queue with priority equal to its incremental cost-benefit ratio. The subsequent assignments might either be upgrades, in which case we move to a costlier arm lower on the vertical plane, or a new unit allocation. The exact sequence of upgrade-or-allocate-new-unit decisions is dictated by the priority queue order $\hat \rho$. The time complexity of this algorithm is log-linear in $nK$.\footnote{To give an impression of the practical performance of using this as an evaluation metric, for a sample size of one million, and five treatment arms, our open-source implementation computes the full solution path in around 1.5 seconds on a standard laptop.}

Depending on the value of $B_{max}$, the treatment allocation for the last unit to be assigned might not be integer-valued. By Theorem \ref{theo:opt_policy} there are two such cases. The first case is if the $i$-th unit has previously not been assigned an arm, and there is not sufficient budget left to allocate the first arm on the convex hull. The second case is if the $i$-th unit has previously been assigned an arm, but there is not sufficient budget left to upgrade the unit to the next arm on the convex hull. In these cases, we may think of assigning the $i$-th unit an arm with a certain probability, as given by the fractional allocation $c_B$. In our intended setup, treatment is assigned to a large number $N$ of units matching the covariate profile of $X_i$, a fractional solution would simply mean that, in the second case, we assign one arm to $c_B N$ units, and the other arm to the remaining $(1-c_B)N$ units.

Finally, while Algorithm \ref{alg:algopath} does not explicitly construct and return the vectors $\hat \pi_B(X_i)$, these are implicitly given by the sequence of (unit, arm) allocations and can be efficiently constructed ex-post, which is the approach taken in the accompanying software.

\afterpage{\FloatBarrier}

\subsection{A Central Limit Theorem for the Qini Curve}\label{sec:inference}
In order to employ the Qini curve for decision-making, we need to form the uncertainty estimate of $\widehat V(\pi_B)$, a point on the curve (we consider functionals such as area under the curves as an interesting extension for future work). In this section, we provide an asymptotic linearity theorem for the policy value estimate, which enables confidence intervals and hypothesis tests via resampling-based methods \citep{chung2013exact,yadlowsky2021evaluating}.

We first make some standard identifying assumptions on the population, 
\begin{assu} [Overlap] \label{assu:overlap}
    There exists $\eta$ such that $e_k(x) > \eta$ for all $x$ and $k$, where $e_k(x) = \PP{W_i = k \mid X_i = x}$. 
\end{assu}
\begin{assu} [Unconfoundedness] \label{assu:unconf}
    $Y_i(0), \ldots, Y_i(K) \indep W_i \mid X_i$.
\end{assu} 

Recall from Theorem \ref{theo:opt_policy} that the policy $\pi_B$ in Definition \ref{def:induced_policy} can be expressed via the thresholding function $T(\cdot \, ; \, \hat \rho, \, \lambda_B, \, c_B)$, where $\hat \rho$ are estimates of the cost-benefit ratios using the estimated function $\hat \tau(\cdot)$, and $C(\cdot)$. The argument $\lambda_B$ in this function represents the value of the threshold of $\hat \rho$ that maximizes the gain under the given treatment and cost rules while satisfying the budget in population.  The empirical policy $\hat \pi_B$ we obtain in Definition \ref{def:induced_test_policy} is a function $T(\cdot \, ; \, \hat \rho, \, \hat \lambda_B, \, \hat c_B)$, i.e., the estimated policy is determined by the empirical threshold $\hat \lambda_B$ (the argument $\hat \rho$ is the same as we are fixing the $\hat \tau(\cdot)$ and $C(\cdot)$ functions that give rise to the treatment rule under consideration).

Our goal is to verify that the empirical threshold $\hat \lambda_B$ converges to the population threshold $\lambda_B$, and that the plug-in construction \eqref{eq:qiniplugin} serves as a valid estimate of $V(\pi_B)$. The convenience of representing $\pi_B$ via the function $T$ is that it yields expressions of our objects of interest as functions of the threshold $\lambda_B$. To this end, we make the following assumptions on the gain and cost functions:

\begin{assu} [Continuity] \label{assu:contin}
    The gain
    $$
    V(T(\cdot \, ; \, \hat \rho, \, \lambda, \, c)) = \EE{\langle T(X_i; \, \hat \rho, \, \lambda, \, c), \tau(X_i) \rangle},
    $$
    and mean cost
    $$
    \Psi(T(\cdot \, ; \, \hat \rho, \, \lambda, \, c)) = \EE{\langle T(X_i; \, \hat \rho, \, \lambda, \, c), C(X_i)\rangle},
    $$
    are continuously differentiable with respect to $\lambda$ (denoted by $V'(\cdot)$ and $\Psi'(\cdot)$), and the set of covariates $X$ having the incremental cost-benefit ratio $\hat \rho$ exactly $\lambda$ has measure 0.
\end{assu}

With suitable assumptions in place, recall from Section \ref{sec:qinicurve} that there are two components needed to form an estimate of a point on the Qini curve, an empirical induced policy and an evaluation score $\widehat \Gamma_i$. As outlined in the previous paragraph, we have an exact expression for the first component, via an estimated threshold $\hat \lambda_B$. This yields a representation of the estimated policy value \eqref{eq:qiniplugin} via 
\begin{equation}
  \widehat Q(B) = \frac{1}{n} \sum_{i=1}^{n}\langle T(X_i ; \, \hat \rho, \, \hat \lambda_B, \, \hat c_B), \widehat \Gamma_i \rangle.
\end{equation}
Our goal is to quantify the uncertainty in estimating $V(T(\cdot \, ; \, \hat \rho, \, \lambda_B, \, c_B)) = \EE{\langle T(X_i; \, \hat \rho, \, \lambda_B, \, c_B), \tau(X_i) \rangle}$ through the sampling variability of this plug-in estimate. This construction has two levels of approximation: using an estimated threshold $\hat \lambda_B$, arising from solving for the empirical induced policy via empirical optimization on the test set, and using an estimated score $\widehat{\Gamma}_i$ constructed on the test set in order to construct the estimate $\widehat V(T(\cdot \, ; \, \hat \rho, \, \hat \lambda_B, \, \hat c_B))$. 

If we were using a fixed deterministic $\lambda_B$, the asymptotic property of $\widehat V(T(\cdot \, ; \, \hat \rho, \, \lambda_B, \, c_B))$ follows from classical doubly robust arguments.
The main challenge for proving central limit theorems for Qini curves is that the budget-satisfying threshold $\lamh_B$ (and thus the actual policy deployed on the test set) is random. So results in, e.g., \citet{athey2021policy}, on test-set evaluation of fixed policies, are not directly applicable. Our idea is to argue asymptotic linearity by building on results of \citet{yadlowsky2021evaluating} who prove a central limit theorem for Qini curves in the two-arm setting, by first proving asymptotic linearity of the threshold $\lamh_B$. Then, we combine with the standard doubly robust argument to prove that $\widehat V(T(\cdot \, ; \, \hat \rho, \, \lambda_B, \, c_B))$ satisfies asymptotic linearity. 

To argue about $\lamh_B$, we note that we can view $\lamh_B$ as an approximate Z-estimator assuming the empirical threshold $\lamh_B$ approximately makes the cost equal to the budget $B$ on the test set. The following theorem details the argument and states a central limit theorem; the overall architecture outlining where the various estimates come from is in Algorithm \ref{alg:algopath_full}.

\begin{theo} \label{theo:asymp_linear_reward}
Under Assumption \ref{assu:cost}, \ref{assu:overlap}, \ref{assu:unconf}, \ref{assu:contin}, let $\hat \tau(\cdot)$ be any estimate of the CATE function, fit on an independent training set. $C(\cdot)$ is a chosen cost function. Suppose all potential outcomes are bounded. Let $\pi_B$ with a threshold $\lambda_B$ be the induced policy \eqref{eq:population_induced_policy} with respect to $\hat \tau(\cdot)$, i.e. $\pi_B(x) = T(x; \, \hat \rho, \, \lambda_B, \, c_B)$ solves the following equation 
\begin{equation}
    \EE{\langle T(X_i; \, \hat \rho, \, \lambda_B, \, c_B), C(X_i) \rangle} = B.
\end{equation}
Let $\hat \pi_B(x) = T(x; \, \hat \rho, \, \hat \lambda_B, \, \hat c_B)$ be the empirical policy \eqref{eq:test_induced_policy} induced by the budget constraint on a test sample of $n$ points $\{X_i, W_i, Y_i\}_{i=1}^{n}$, such that
\begin{equation} \label{eq:approxlambda}
    \frac{1}{n} \sum_{i=1}^{n} \langle T(X_i; \, \hat \rho, \, \hat \lambda_B, \, \hat c_B), C(X_{i}) \rangle - B = o_p(n^{-1/2}).
\end{equation}
Let $a_i$ be the arm assigned to unit $X_i$, and assume further that $\hat \rho_{a_i}(x_i)$ has a continuous density in a neighborhood of $\lambda_B$ for any $i$. Assume that we construct doubly robust scores $\widehat \Gamma_i$ with cross-fitting on the test set using \eqref{eq:drAIPW}, with the following assumptions on the estimates of the nuisance components $\mu$ and $e$: 
\begin{itemize}
    \item The estimates are sup-norm consistent.  
    \item The estimates satisfy the following error bounds 
\begin{equation}\label{assump:consistency_eqv}
\EE{\Big( \hat{\mu}_{k}^{-q(i)}(X_i) - \mu_{k}(X_i) \Big)^2} \cdot
\EE{\Big( \hat{e}_{k}^{-q(i)}(X_i) - e_{k}(X_i) \Big)^2} = o(1/n),~ k=0,\ldots, K.
\end{equation}
\end{itemize}
Let $\psi(x) = \langle T(x; \, \hat \rho, \, \lambda_B, \, c_B), C(x) \rangle$ - B. Then $\widehat V(T(\cdot \, ; \, \hat \rho, \, \hat \lambda_B, \, \hat c)) = \frac{1}{n} \sum_{i=1}^{n}\langle T(X_i; \, \hat \rho, \, \hat \lambda_B, \, \hat c_B), \widehat \Gamma_i \rangle$ is asymptotically linear, with the following expansion
\begin{align}
    &n^{1/2}\left(\widehat V(T(\cdot \, ; \, \hat \rho, \, \hat \lambda_B, \, \hat c_B))- V(T(\cdot \, ; \, \hat \rho, \, \lambda_B, \, c_B))\right) \nonumber \\
    &= n^{-1/2} \sum_{i=1}^n \left(\langle \pi_B(X_i), \Gamma_i\rangle -  \frac{V'(\pi_B)}{\Psi'(\pi_B)}\psi(X_i) - V(\pi_B) \right)+ o_p(1),
\end{align}
where $\Gamma_i$ is the oracle doubly robust score with $k$-th entry
\begin{equation}
\Gamma_{i,k} =
% {\tau}(X_i) + 
{\mu}_k(X_i) - {\mu}_0(X_i) +
 \left( \frac{\mathbf{1}(W_i=k)}{e_{k}(X_i)} -
\frac{\mathbf{1}(W_i=0)}{e_{0}(X_i)} \right) (Y_i - {\mu}_{W_i}(X_i)).
\end{equation}
\end{theo}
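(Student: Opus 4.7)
The plan is to decompose the error $\widehat V(T(\cdot;\hat\rho,\hat\lambda_B,\hat c_B)) - V(T(\cdot;\hat\rho,\lambda_B,c_B))$ into two pieces: one arising from the empirical threshold $\hat\lambda_B$ differing from the population threshold $\lambda_B$, and one arising from the doubly robust score $\widehat\Gamma_i$ differing from the oracle score $\Gamma_i$. Concretely, I would write the target as
\begin{align*}
&\tfrac{1}{n}\sum_i \langle T(X_i;\hat\rho,\hat\lambda_B,\hat c_B),\widehat\Gamma_i\rangle - V(T(\cdot;\hat\rho,\lambda_B,c_B)) \\
&\quad = \underbrace{\tfrac{1}{n}\sum_i \langle T(X_i;\hat\rho,\lambda_B,c_B),\widehat\Gamma_i\rangle - V(T(\cdot;\hat\rho,\lambda_B,c_B))}_{\text{(I): DR evaluation at fixed }\lambda_B} \\
&\qquad + \underbrace{\tfrac{1}{n}\sum_i \langle T(X_i;\hat\rho,\hat\lambda_B,\hat c_B) - T(X_i;\hat\rho,\lambda_B,c_B),\widehat\Gamma_i\rangle}_{\text{(II): correction from }\hat\lambda_B - \lambda_B}.
\end{align*}
For term (I), since $\pi_B$ depends only on the independent training sample (through $\hat\tau$), conditional on the training sample the policy $T(\cdot;\hat\rho,\lambda_B,c_B)$ is a fixed function of $X_i$, so standard doubly robust policy evaluation arguments (e.g., as in \citet{athey2021policy} combined with the product-rate condition \eqref{assump:consistency_eqv} and cross-fitting) yield the linear representation $n^{-1/2}\sum_i(\langle \pi_B(X_i),\Gamma_i\rangle - V(\pi_B)) + o_p(1)$.

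The substantive work is in handling term (II). I would first show that $\hat\lambda_B$ is an approximate Z-estimator of $\lambda_B$ with the defining equation being the empirical budget constraint \eqref{eq:approxlambda}. Writing $M(\lambda) := \EE{\langle T(X_i;\hat\rho,\lambda,c),C(X_i)\rangle} - B$ and $M_n(\lambda) := n^{-1}\sum_i \langle T(X_i;\hat\rho,\lambda,c),C(X_i)\rangle - B$, Assumption \ref{assu:contin} gives that $M$ is continuously differentiable at $\lambda_B$ with derivative $\Psi'(\pi_B)$, and the measure-zero condition on $\{\hat\rho = \lambda_B\}$ plus boundedness of costs yields empirical process control over a small neighborhood of $\lambda_B$. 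Combined with $M_n(\hat\lambda_B) = o_p(n^{-1/2})$ and $M(\lambda_B)=0$, standard Z-estimator asymptotics deliver
\[
\hat\lambda_B - \lambda_B = -\frac{1}{\Psi'(\pi_B)}\cdot\frac{1}{n}\sum_i \psi(X_i) + o_p(n^{-1/2}).
\]

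Next, I would linearize the gain in $\lambda$. Using Assumption \ref{assu:contin}, the population gain $\lambda \mapsto V(T(\cdot;\hat\rho,\lambda,c))$ is continuously differentiable with derivative $V'(\pi_B)$ at $\lambda_B$, so a Taylor expansion combined with the rate $\hat\lambda_B - \lambda_B = O_p(n^{-1/2})$ gives
\[
V(T(\cdot;\hat\rho,\hat\lambda_B,\hat c_B)) - V(T(\cdot;\hat\rho,\lambda_B,c_B)) = V'(\pi_B)(\hat\lambda_B - \lambda_B) + o_p(n^{-1/2}).
\]
To transfer this to the empirical quantity (II), I would show that the difference $(II) - V'(\pi_B)(\hat\lambda_B-\lambda_B)$ is $o_p(n^{-1/2})$. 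This uses that only units with $\hat\rho_{a_i}(X_i)$ between $\lambda_B$ and $\hat\lambda_B$ have changed assignment, that there are $O_p(n^{1/2})$ such units by the assumed continuous density of $\hat\rho_{a_i}$ near $\lambda_B$, and that the scores $\widehat\Gamma_i$ restricted to that shrinking set produce an empirical-process term of order $o_p(n^{-1/2})$ (bounded outcomes and overlap give envelope control; sup-norm consistency of nuisance estimates lets us replace $\widehat\Gamma_i$ by $\Gamma_i$ with uniformly vanishing error on this set). Substituting the expansion for $\hat\lambda_B - \lambda_B$ then contributes the term $-\frac{V'(\pi_B)}{\Psi'(\pi_B)}\psi(X_i)$ to the influence function.

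The main obstacle I expect is the last step: controlling term (II) minus its linearization uniformly near $\lambda_B$ when the scores $\widehat\Gamma_i$ are themselves estimated on the test set. The thresholding rule $T$ is discontinuous in $\lambda$ at each realized $\hat\rho_{a_i}(X_i)$, so one cannot differentiate pathwise; I would need a Donsker/stochastic equicontinuity argument localized to a $n^{-1/2}$-neighborhood of $\lambda_B$, leveraging the continuous-density assumption on $\hat\rho_{a_i}$ to bound the bracket size, and the cross-fitting construction of $\widehat\Gamma_i$ to decouple it from the randomness in $\hat\lambda_B$. This is essentially the multi-armed extension of the argument of \citet{yadlowsky2021evaluating} for the two-arm case, with the added wrinkle that the relevant ordering is on incremental rather than raw cost-benefit ratios.
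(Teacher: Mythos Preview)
Your proposal is correct and assembles the same ingredients as the paper (Z-estimator expansion for $\hat\lambda_B$, doubly robust score replacement, stochastic equicontinuity, delta method), but orders them differently. The paper decomposes as
\[
\widehat V(\hat\lambda_B) - V(\lambda_B) \;=\; \bigl[\widehat V(\hat\lambda_B) - \widetilde V(\hat\lambda_B)\bigr] + \bigl[\widetilde V(\hat\lambda_B) - V(\hat\lambda_B)\bigr] + \bigl[V(\hat\lambda_B) - V(\lambda_B)\bigr],
\]
with $\widetilde V(\lambda) = n^{-1}\sum_i\langle T(X_i;\hat\rho,\lambda,c),\tau(X_i)\rangle$: it first swaps $\widehat\Gamma_i$ for $\tau(X_i)$ \emph{at the random} $\hat\lambda_B$, then applies Lemma~19.24 of van der Vaart to move $\hat\lambda_B$ to $\lambda_B$ in the empirical average, then the delta method. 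Your split $[\widehat V(\lambda_B) - V(\lambda_B)] + [\widehat V(\hat\lambda_B) - \widehat V(\lambda_B)]$ front-loads the routine fixed-policy DR evaluation, at the price that your term (II) now carries all three remaining error sources at once; to linearize it you will effectively re-split it along the paper's lines anyway. The paper also proves consistency $\hat\lambda_B \to_p \lambda_B$ as a separate preliminary step (via a Donsker/Glivenko--Cantelli lemma for the class $\{\mathbf{1}(g_1(x)>\lambda>g_2(x))h(x)\}$) before invoking Z-estimator asymptotics; you should do the same rather than fold it into ``standard Z-estimator asymptotics.''

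One correction to your stated mechanism for the hard step: cross-fitting does \emph{not} decouple $\widehat\Gamma_i$ from $\hat\lambda_B$, because $\hat\lambda_B$ depends on the entire test-set covariate vector. What actually makes the argument work---and what the paper uses---is that $\hat\lambda_B$ is a function of $\{X_j\}$ only (through $\hat\tau$ and $C$), not of $\{W_j,Y_j\}$. Conditioning on $X_{\text{train}},X_{\text{test}}$, the difference $T(X_i;\hat\rho,\hat\lambda_B,\hat c_B)-T(X_i;\hat\rho,\lambda_B,c_B)$ is deterministic while $\Gamma_i-\tau(X_i)$ remains mean zero, so the cross term has conditional variance bounded by a constant times $\mathbb{E}\bigl[\|T(X_i;\hat\rho,\hat\lambda_B,\hat c_B)-T(X_i;\hat\rho,\lambda_B,c_B)\|_2^2\bigr]=o(1)$. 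That conditioning-on-covariates device, not cross-fitting, is the key to controlling (II) minus its linearization.
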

In Theorem \ref{theo:asymp_linear_reward} we condition on the training set used to obtain the conditional average treatment effect function, and consider the randomness on the test set used to evaluate the induced policies. Asymptotic linearity justifies bootstrap-based inference of the Qini curves, in particular, it makes half-sampling a suitable choice for resampling Algorithm \ref{alg:algopath} \citep[Lemma 4]{yadlowsky2021evaluating}. To compute one single bootstrap replicate, rerun Algorithm \ref{alg:algopath} on a random half-sample of units to obtain a path of policy value estimates, then interpolate this to the grid of spend values on the path computed for the full sample. As only half of the samples are passed to Algorithm \ref{alg:algopath}, the evaluation score $\widehat \Gamma_j$ for the $j$-th drawn unit is given a weight equal to 2. Algorithm \ref{alg:bootstrap} outlines the steps.

This framework for evaluating multi-armed treatment rules can also be applied in observational settings where Assumption \ref{assu:unconf} does not hold, by instead relying on ``proximal'' identifying assumptions \citep{tchetgen2020introduction}, where we only have access to noisy measurements of the confounding variables. In this setting, similar arguments as in \citet{pmlr-v202-sverdrup23a} can be applied to learn a function $\tau(\cdot)$, and then replace the score construction $\widehat \Gamma_i$ with the proximal doubly robust score of \citet{cui2020semiparametric}.

\begin{algorithm}[hp]
    \caption{Compute bootstrapped standard errors of the Qini curve. Estimates from Algorithm \ref{alg:algopath_full} are taken as given and Algorithm \ref{alg:algopath} is run on a subsample from $\mathcal{S}_{test}$.}
    \label{alg:bootstrap}
    \SetKwInOut{Input}{Input}
    \SetKwInOut{Output}{Output}
    \Input{
    Test set treatment effect estimates $\{\hat \tau(X_i)\}_{i=1}^{n}$ from step 2 in Algorithm \ref{alg:algopath_full},\\
    ~Test set costs $\{C(X_i)\}_{i=1}^{n}$,\\
    ~Test set evaluation scores $\{\widehat \Gamma_i\}_{i=1}^{n}$ from step 3 in Algorithm \ref{alg:algopath_full},\\
    ~Maximum budget per unit, $B_{max}$,\\
    ~Number of bootstrap replicates $R$.}
    \Output{~A vector of standard error estimates over increasing spend levels up to $B_{max}$.} 

    \For{$r=1 \ldots R$} {
        {Draw a random half from $n$ samples: $j = j_1 \ldots j_{\lfloor{n/2}}$\\}
        {Estimate the $r$-th solution path on the bootstrapped sample using Algorithm \ref{alg:algopath}.}
        {$\{\widehat V(\pi_B)^{(r)}  \}_{B\rightarrow 0}^{B_{max}} \leftarrow$ \texttt{ComputePath}$\left(\hat \tau(X_j),~ C(X_j),~ \widehat \Gamma_j,~ B_{max},~ j = j_1 \ldots j_{\lfloor{n/2}}\right)$}\\
    }
    
    {Form estimates of the standard error of the policy value for each $B$.}\\
    \For{$B \rightarrow 0 \ldots B_{max}$} {
        {$\sigma(\widehat V(\pi_B)) \leftarrow \dfrac{1}{R} \mathlarger \sum_{r=1}^{R}\left( \widehat V(\pi_B)^{(r)} - \dfrac{1}{R} \mathlarger \sum_{r=1}^{R} \widehat V(\pi_B)^{(r)} \right)^2$}
    }
    \Return $\{ \sigma(\widehat V(\pi_B)) \}_{B \rightarrow 0}^{B_{max}}$
\end{algorithm}

\section{Simulation Experiment}\label{sec:simulation}
There is a wide variety of strategies available to estimate conditional average treatment effects $\tau(X_i)$ that can be extended to the multi-armed setting, including \citet{athey2016recursive}, \citet{kennedy2020drlearner}, \citet{kunzel2019metalearners}, \citet{nie2020quasi}, and \citet{seibold2016model}.
In the empirical illustrations, we use the forest-based multi-armed treatment effect estimator of \citet*{wager2019grf}, which is based on the \emph{R-learner} framework \citep{nie2020quasi}. The method is available via the \texttt{R} package \texttt{grf} \citep{GRF, Rcore} via the function \texttt{multi\_arm\_causal\_forest}, which has built-in functionality to produce the multi-armed evaluation scores \eqref{eq:drAIPW}. This approach estimates $\tau(X_i)$ directly using the following forest-weighted loss
\begin{align*}
\hat \tau(x) = \argmin_{\tau} \left\{ \sum_{i=1}^{n}
 \alpha_i (x) \left( Y_i - \hat m^{(-i)}(X_i) - c(x) -
 \left\langle 1_{W_i} - \hat e^{(-i)}(X_i), \,  \tau(X_i)  \right\rangle
 \right)^2 \right\},
\end{align*}
where $\hat m$ are estimates of the conditional mean function marginalizing over treatment $\EE{Y_i \mid X_i = x}$, $\hat e$ are estimates of the propensity scores, and the superscript $(-i)$ indicates that the estimates for the $i$-th observation is obtained without using unit $i$ for training. The forest weights $\alpha(x)$ are adaptive nearest neighbor weights obtained by a generalized random forest \citep{wager2019grf} searching for heterogeneity in the vector-valued target parameter $\tau(X_i)$.

As a synthetic illustration, we adapt the three-armed data generating process in \citet{zhou_multi_action}, treating the first arm as a zero-cost control, with covariates $X_i$ identically and independently distributed on $[0, 1]^{10}$, and potential outcomes distributed according to
\begin{equation*}
    \EE{Y_i(w_i) \mid X_i} = (3 - w_i)\mathbf{1_0}(X_i) + (2 - 0.5|w_i - 1|)\mathbf{1_1}(X_i) + 1.5(w_i - 1)\mathbf{1_2}(X_i), 
\end{equation*}
where $\mathbf{1_0}(X_i), \mathbf{1_1}(X_i), \mathbf{1_2}(X_i)$ indicate which region a unit belongs to:
\begin{align*}
    &\mathbf{1_0}(X_i) = \mathbf{1}(X_{i5} \leq 0.6) \mathbf{1}(X_{i7} \geq 0.35), \\
    &\mathbf{1_1}(X_i) = \mathbf{1}\left(X_{i5}^2/0.6^2 + X_{i7}^2/0.35^2 < 1\right) + \mathbf{1}\left((X_{i5}-1)^2/0.4^2 + (X_{i7}-1)^2/0.35^2 < 1\right), \\
    &\mathbf{1_2}(X_i) = 1 - \mathbf{1_0}(X_i) - \mathbf{1_1}(X_i).
\end{align*}
We let the assignment probabilities for the different arms be the same,
\begin{align*}
    &\PP{W_i=0 \mid X_i} = 1/3,~ \PP{W_i=1 \mid X_i} = 1/3,~ \PP{W_i=2 \mid X_i} = 1/3.
\end{align*}
We treat the cost for the two treatment arms as known and equal to a unit's observable pre-treatment covariates, $C_i(1) = X_{i1}, C_i(2) = 2X_{i2}$. Outcomes are observed with noise $N(0, 4)$.

To study the practical inferential properties of points on the Qini curve for multiple arms, using flexible machine learning estimators, we calculate coverage of 95\% confidence intervals for $Q(B)$. We first fix a $\hat \tau(\cdot)$ function estimated on a training set with $n=10000$. We consider ten points $B = \{0.05, 0.10, 0.15, 0.20, 0.25, 0.30, 0.35, 0.4, 0.45, 0.5\}$ on the Qini curve, then on a test set with size $n=\{1000, 2000, 5000, 10000\}$ compute the policy $\hat \pi_B$, estimate doubly robust scores $\widehat \Gamma$, then calculate coverage of the estimated $Q(B)$ using bootstrapped standard errors. The results in Table \ref{tab:CI} show the mean empirical coverage of this procedure over $1000$ Monte Carlo repetitions. In Appendix \ref{appendix:sim} we repeat this exercise, but where we form the nuisance estimates needed to construct the multi-armed evaluation score \eqref{eq:drAIPW} via boosting instead of random forests.

\begin{table}[ht]
\begin{center}
\begin{tabular}{c|rrrrrrrrrr}
      \multicolumn{6}{r}{Spend ($B$)}\\
      Sample size & 0.05 & 0.1 & 0.15 & 0.2 & 0.25 & 0.3 & 0.35 & 0.4 & 0.45 & 0.5 \\
      \hline
        1000 & 0.95 & 0.95 & 0.95 & 0.95 & 0.95 & 0.95 & 0.95 & 0.95 & 0.95 & 0.94 \\
        2000 & 0.95 & 0.95 & 0.95 & 0.95 & 0.94 & 0.95 & 0.95 & 0.95 & 0.95 & 0.95 \\
        5000 & 0.95 & 0.95 & 0.96 & 0.96 & 0.95 & 0.95 & 0.95 & 0.95 & 0.95 & 0.95 \\
        10000 & 0.95 & 0.93 & 0.94 & 0.94 & 0.94 & 0.94 & 0.94 & 0.95 & 0.95 & 0.94 \\
    \end{tabular}
\caption{Coverage (\%) of the 95\% confidence intervals for $Q(B)$ at ten spend points ($B$) using the simulation setup described in Section \ref{sec:simulation}. The number of Monte Carlo repetitions is 1000. The number of bootstrap replicates for standard error estimation is 200.}
\label{tab:CI}
\end{center}
\end{table}

\section{Hypothesis Tests for Treatment Targeting Strategies}\label{sec:targeting}
Our proposed method can be used to compare different targeting rules in two ways.
A first use case is as a tool for practitioners to quantify the benefit of employing more arms as presented in, e.g., Figure \ref{fig:fig1}.
A second use case is as a tool for practitioners to quantify how much benefit there is to treatment targeting.  Addressing this use case involves constructing a suitable baseline policy $\bar \pi_B$ that ignores covariates, and then quantifying the value of treatment targeting via a difference in Qini curves. In the case of only a single treatment arm, the policy that ignores covariates is simple to compute, as it for a given budget amounts to treating some fraction of the population, and the value of this policy is simply a fraction of the average treatment effect of the arm. With more than one treatment arm, $\bar \pi_B$ needs to take into account the average treatment effect and average costs of the $K$ arms, which motivates the following definition.

\begin{defi} \label{def:pibar}
For a given budget $B$, the policy $\bar \pi_B$ which ignores covariates is the policy that solves the problem in Definition \ref{def:induced_policy} with only access to the average treatment effect estimates $\bar \tau = \EE{\hat \tau(X_i)}$ and average costs $\widebar C = \EE{C(X_i)}$. The Qini curve for this policy is the function $\widebar Q(B) = V(\bar \pi_B), \, B \in (0, B_{max}]$.
\end{defi}

Intuitively, this policy collapses all the information from the $X_i$-specific convex hulls to the single convex hull traced out by $\bar \tau$ and $\widebar C$. For a given budget, it splits the budget between two consecutive arms on the convex hull. Computing the value of this policy is straightforward by using Algorithm \ref{alg:algopath} on the single convex hull traced out by $(\bar \tau$, $\widebar C)$ and evaluating it with $\widehat \Gamma_i := 1/n \sum_i \widehat \Gamma_i$.

To shed light on how to address the two targeting use cases mentioned above, Table \ref{tab:targetingsetup} presents an overview of different policy objects that arise when there are three treatment arms available. The traditional Qini curve for a single treatment arm allows for comparisons between rows and columns in the first three rows. For example, using the simplified notation in Table \ref{tab:targetingsetup}, $Q_1 - Q_2$ is the value of targeting with arm 1 over arm 2; and $Q_1 - \widebar Q_1$ is the value of optimally allocating arm 1 based on covariates vs.~spending the same budget allocating arm 1 to a random subset of units, at a given budget level. Our proposed Qini curve extension to multiple arms facilitates policy value comparisons across all entries in Table \ref{tab:targetingsetup}, for example, $Q_{1,2,3} - Q_1$ measures the value of targeting with all available arms over targeting with only arm 1.

\begin{table}[ht]
\begin{center}
\begin{tabular}{l|ll}
      \multicolumn{3}{r}{Use covariates $X_i$}\\
      Arms & Yes & No \\
      \hline
      1 & $Q_{1}$ & $\widebar Q_{1}$ \\
      2 & $Q_{2}$ & $\widebar Q_{2}$ \\
      3 & $Q_{3}$ & $\widebar Q_{3}$ \\
      1,~2 & $Q_{1,2}$ & $\widebar Q_{1,2}$ \\
      1,~3 & $Q_{1,3}$ & $\widebar Q_{1,3}$ \\
      2,~3 & $Q_{2,3}$ & $\widebar Q_{2,3}$ \\
      1,~2,~3 & $Q_{1,2,3}$ & $\widebar Q_{1,2,3}$ \\
    \end{tabular}
\caption{Possible policy value configurations with three arms at a given budget $B$. For example, $Q_{1}$ is the value of allocating using only arm 1, and $Q_{1,2,3}$ is the value of allocating using all arms available. The difference $Q_{1,2,3} - Q_1$ quantifies the value of targeting with all arms over targeting with only arm 1. The rightmost column indicates the corresponding policy value that ignores covariates, as in Definition \ref{def:pibar}.} \label{tab:targetingsetup}
\end{center}
\end{table}

To conduct hypothesis tests for the value of different targeting strategies, we can employ the central limit theorem in Section \ref{sec:inference} to construct asymptotically valid confidence intervals for the difference in policy values. To assess the value of targeting with an optimal combination of all available arms over using only one, or a subset of arms, we can employ the following pairwise comparison:
\begin{exam}(Value of employing more treatment arms).
Let $Q(B)$ be the Qini curve for the policy $\pi_B$ using all available arms, and let $Q_k(B)$ be the Qini curve for the policy $\pi^k_B$ using only the $k$-th arm (or a subset of all available arms, as denoted by the subscripts in Table \ref{tab:targetingsetup}). For a given $B$, a pointwise $1-\alpha$ confidence interval for the difference $Q(B) - Q_k(B)$ is 
$$
\widehat Q(B) - \widehat Q_k(B) \pm z_{1-\alpha/2}\hat \sigma,
$$
where $z$ are the standard normal quantiles and $\hat \sigma$ estimates of the standard deviation $\sigma (Q(B) - Q_k(B))$. The accompanying software estimates $\hat \sigma$ via a paired bootstrap using Algorithm \ref{alg:bootstrap}.
\end{exam}
Figure \ref{fig:figure_targeting_b} provides a stylized example of what Qini curves could look like in the scenario where there, depending on budget, is a benefit to using an optimal combination of arms over just a single arm, when there are 3 treatment arms (plus a control) available. For example, at $B=2$ the difference $Q(B) - Q_1(B)$ is the vertical difference between the curve for all arms and the curve for arm 1 and indicates that optimally selecting among all available arms can yield an increase in gain of around 1.5 over only targeting with arm 1.
\begin{figure}[t]
    \centering
    \begin{subfigure}[b]{0.475\textwidth}  
        \centering 
        \includegraphics[width=\textwidth]{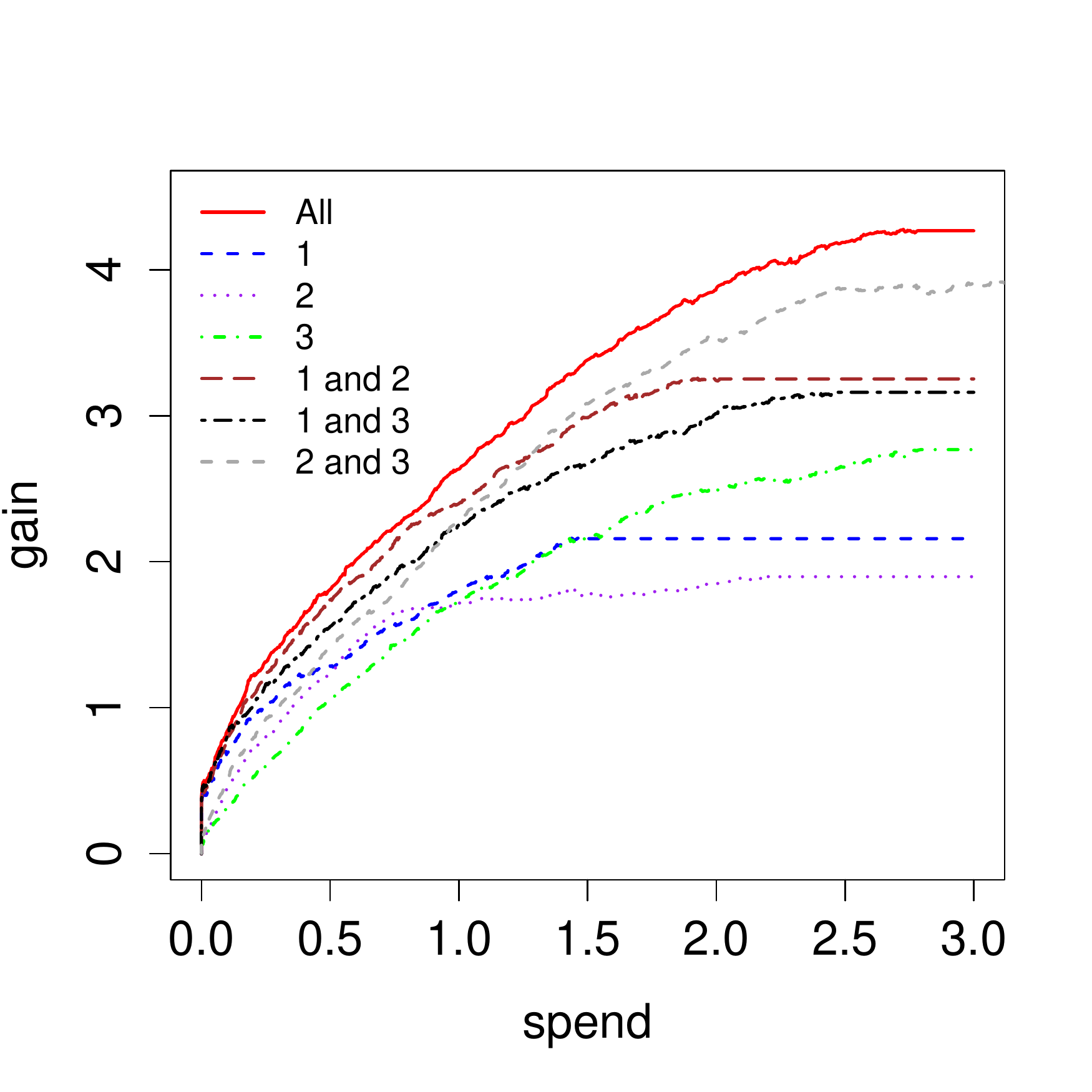}
        \caption[]
        {{The policy using all arms, and possible configurations of policies using the remaining arms.}}    
        \label{fig:figure_targeting_b}
    \end{subfigure}
    \hfill
    \begin{subfigure}[b]{0.475\textwidth}
        \centering
        \includegraphics[width=\textwidth]{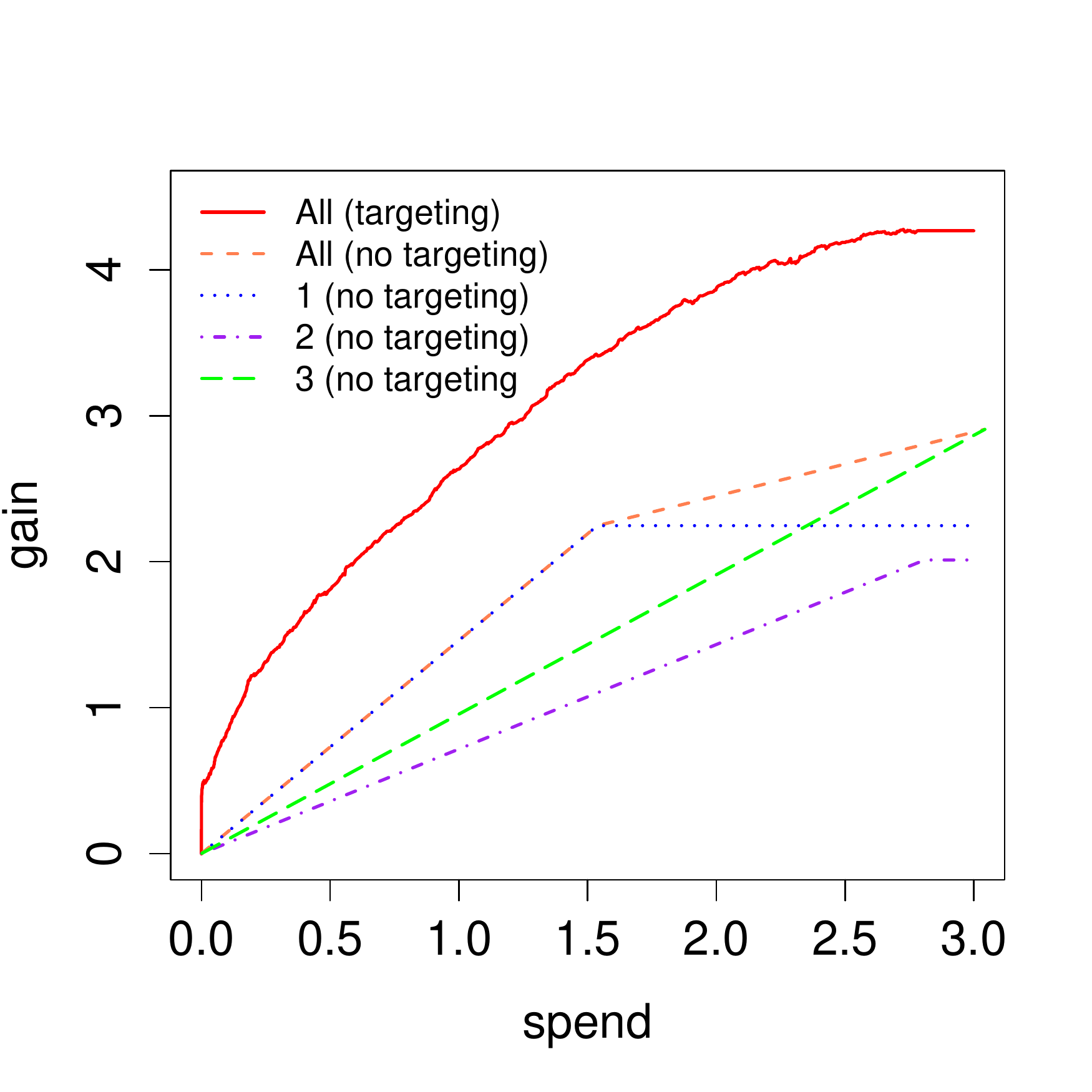}
        \caption[]
        {{The policy using all arms, with and without targeting, and three single-armed policies without targeting.}}    
        \label{fig:figure_targeting_a}
    \end{subfigure}    
    \caption{Stylized example of Qini curves with three treatment arms and a control. The first line segment of the multi-armed policy without targeting (b) traces out an arm-1 average, and the second segment traces out a convex combination of an arm-1 and arm-3 average.}
\end{figure}
To assess the value of targeting based on covariates, we can employ a similar pairwise comparison:
\begin{exam}(Value of targeting).
For a given $B$, a pointwise $1-\alpha$ confidence interval for the difference $Q(B) - \widebar Q(B)$ is
$$
\widehat Q(B) - \widehat{\widebar Q}(B) \pm z_{1-\alpha/2}\hat \sigma,
$$
where $z$ are the standard normal quantiles and $\hat \sigma$ an estimate of the standard deviation $\sigma (Q(B) - \widebar Q(B))$.
\end{exam}

Figure \ref{fig:figure_targeting_a} illustrates how Qini curves may look in the scenario where there is a benefit to targeting based on subject characteristics. For a fixed spend point and policy, the quantity $Q(B) - \widebar Q(B)$ measures the vertical difference between the targeting curve for all arms and one of the remaining lines, which signifies a baseline policy using all or only a single arm without targeting. Since this distance is positive, it signifies a benefit of targeting based on subject characteristics.

To verify the practical performance of the hypothesis test constructions in this Section, we revisit the simulation setup in Section \ref{sec:simulation} and repeat the same exercise as in Table \ref{tab:CI}, but for five different policy value comparisons with standard errors calculated via a paired bootstrap (that takes into account the correlation between curves evaluated on the same test set data). The results in Table \ref{tab:CI_paired} indicate these constructions can be justified in practice.

\begin{rema}
    The natural area under the curve counterparts for metrics in this section would be the integrated difference. For example, given some chosen maximum budget $\widebar B$ the quantity $\int_{0}^{\widebar B} \left(Q(B) - Q_k(B)\right) dB$ would estimate the area between two curves in Figure \ref{fig:figure_targeting_b}. We consider this an interesting extension but leave the development of such a functional central limit theorem to future work.
\end{rema}

\begin{table}[p]
\begin{center}
\subcaption*{Panel A: $Q_{1,2} - \widebar Q_{1,2}$}
\begin{tabular}{c|rrrrrrrrrr}
      \multicolumn{6}{r}{Spend ($B$)}\\
      Sample size & 0.05 & 0.1 & 0.15 & 0.2 & 0.25 & 0.3 & 0.35 & 0.4 & 0.45 & 0.5 \\
      \hline
1000 & 0.95 & 0.96 & 0.96 & 0.97 & 0.96 & 0.95 & 0.96 & 0.96 & 0.96 & 0.96 \\ 
  2000 & 0.95 & 0.95 & 0.95 & 0.94 & 0.95 & 0.95 & 0.95 & 0.95 & 0.95 & 0.95 \\ 
  5000 & 0.95 & 0.95 & 0.95 & 0.95 & 0.95 & 0.95 & 0.95 & 0.95 & 0.95 & 0.95 \\ 
  10000 & 0.95 & 0.94 & 0.94 & 0.94 & 0.95 & 0.95 & 0.95 & 0.95 & 0.95 & 0.95 \\ 
\end{tabular}

\bigskip
\subcaption*{Panel B: $Q_{1,2} - Q_1$}
\begin{tabular}{c|rrrrrrrrrr}
      \multicolumn{6}{r}{Spend ($B$)}\\
      Sample size & 0.05 & 0.1 & 0.15 & 0.2 & 0.25 & 0.3 & 0.35 & 0.4 & 0.45 & 0.5 \\
      \hline
1000 & 0.95 & 0.95 & 0.96 & 0.96 & 0.95 & 0.95 & 0.94 & 0.95 & 0.95 & 0.95 \\ 
  2000 & 0.95 & 0.95 & 0.95 & 0.93 & 0.95 & 0.95 & 0.94 & 0.94 & 0.95 & 0.95 \\ 
  5000 & 0.95 & 0.96 & 0.95 & 0.95 & 0.94 & 0.95 & 0.95 & 0.96 & 0.96 & 0.95 \\ 
  10000 & 0.95 & 0.94 & 0.94 & 0.94 & 0.95 & 0.95 & 0.96 & 0.96 & 0.95 & 0.95 \\ 
\end{tabular}

\bigskip
\subcaption*{Panel C: $Q_{1,2} - Q_2$}
\begin{tabular}{c|rrrrrrrrrr}
      \multicolumn{6}{r}{Spend ($B$)}\\
      Sample size & 0.05 & 0.1 & 0.15 & 0.2 & 0.25 & 0.3 & 0.35 & 0.4 & 0.45 & 0.5 \\
      \hline
1000 & 0.95 & 0.95 & 0.96 & 0.95 & 0.96 & 0.95 & 0.95 & 0.96 & 0.94 & 0.94 \\ 
  2000 & 0.96 & 0.95 & 0.97 & 0.96 & 0.96 & 0.96 & 0.96 & 0.95 & 0.95 & 0.96 \\ 
  5000 & 0.96 & 0.95 & 0.95 & 0.95 & 0.95 & 0.94 & 0.95 & 0.95 & 0.96 & 0.95 \\ 
  10000 & 0.95 & 0.95 & 0.96 & 0.96 & 0.95 & 0.95 & 0.96 & 0.96 & 0.95 & 0.95 \\ 
\end{tabular}

\bigskip
\subcaption*{Panel D: $Q_1 - \widebar Q_{1,2}$}
\begin{tabular}{c|rrrrrrrrrr}
      \multicolumn{6}{r}{Spend ($B$)}\\
      Sample size & 0.05 & 0.1 & 0.15 & 0.2 & 0.25 & 0.3 & 0.35 & 0.4 & 0.45 & 0.5 \\
      \hline
1000 & 0.96 & 0.95 & 0.95 & 0.96 & 0.95 & 0.96 & 0.96 & 0.96 & 0.96 & 0.95 \\ 
  2000 & 0.95 & 0.96 & 0.95 & 0.96 & 0.96 & 0.96 & 0.96 & 0.96 & 0.96 & 0.95 \\ 
  5000 & 0.95 & 0.95 & 0.95 & 0.94 & 0.94 & 0.95 & 0.95 & 0.95 & 0.96 & 0.95 \\ 
  10000 & 0.95 & 0.95 & 0.95 & 0.94 & 0.94 & 0.95 & 0.95 & 0.95 & 0.95 & 0.95 \\ 
\end{tabular}

\bigskip
\subcaption*{Panel E: $Q_2 - \widebar Q_{1,2}$}
\begin{tabular}{c|rrrrrrrrrr}
      \multicolumn{6}{r}{Spend ($B$)}\\
      Sample size & 0.05 & 0.1 & 0.15 & 0.2 & 0.25 & 0.3 & 0.35 & 0.4 & 0.45 & 0.5 \\
      \hline
1000 & 0.95 & 0.95 & 0.96 & 0.95 & 0.95 & 0.95 & 0.96 & 0.96 & 0.95 & 0.96 \\ 
  2000 & 0.95 & 0.96 & 0.95 & 0.94 & 0.95 & 0.95 & 0.95 & 0.95 & 0.95 & 0.95 \\ 
  5000 & 0.95 & 0.95 & 0.95 & 0.95 & 0.94 & 0.95 & 0.95 & 0.95 & 0.94 & 0.95 \\ 
  10000 & 0.94 & 0.95 & 0.94 & 0.95 & 0.96 & 0.95 & 0.96 & 0.94 & 0.95 & 0.95 \\ 
\end{tabular}
\end{center}
\caption{Coverage (\%) of the 95\% confidence intervals for paired differences at ten spend points using the simulation setup described in Section \ref{sec:simulation}. The number of Monte Carlo repetitions is 1000. The number of bootstrap replicates for standard error estimation is 200.}
\label{tab:CI_paired}
\end{table}

\afterpage{\FloatBarrier}

\section{Application: Treatment Targeting for Election Turnout}\label{sec:voting}
\citet{gerber2008social} conducts a multi-armed randomized controlled trial to study the social determinants of voter turnout in the 2006 US primary election, by mailing out letters of various forms. 180 002 households were randomly assigned one of $K=4$ treatment arms where arm 1 (``Civic'') tells the recipient to do their civic duty and vote. Arm 2 (``Hawthorne'') informs the recipient that their decision to vote or not is being monitored. Arm 3 (``Self'') informs the recipient about their and similar households' past voting history, and arm 4 (``Neighbors'') will let the recipient's neighbors know about their voting history. The control group receives no letter. The outcome of interest is whether a person in the household votes in the upcoming primary election. \citet{gerber2008social} finds that sending out the ``Neighbors'' letter is the most effective at increasing voter turnout, with little evidence of heterogeneity.

\begin{figure}[t]
    \centering
    \begin{subfigure}[b]{0.475\textwidth}
        \centering
        \includegraphics[width=\textwidth]{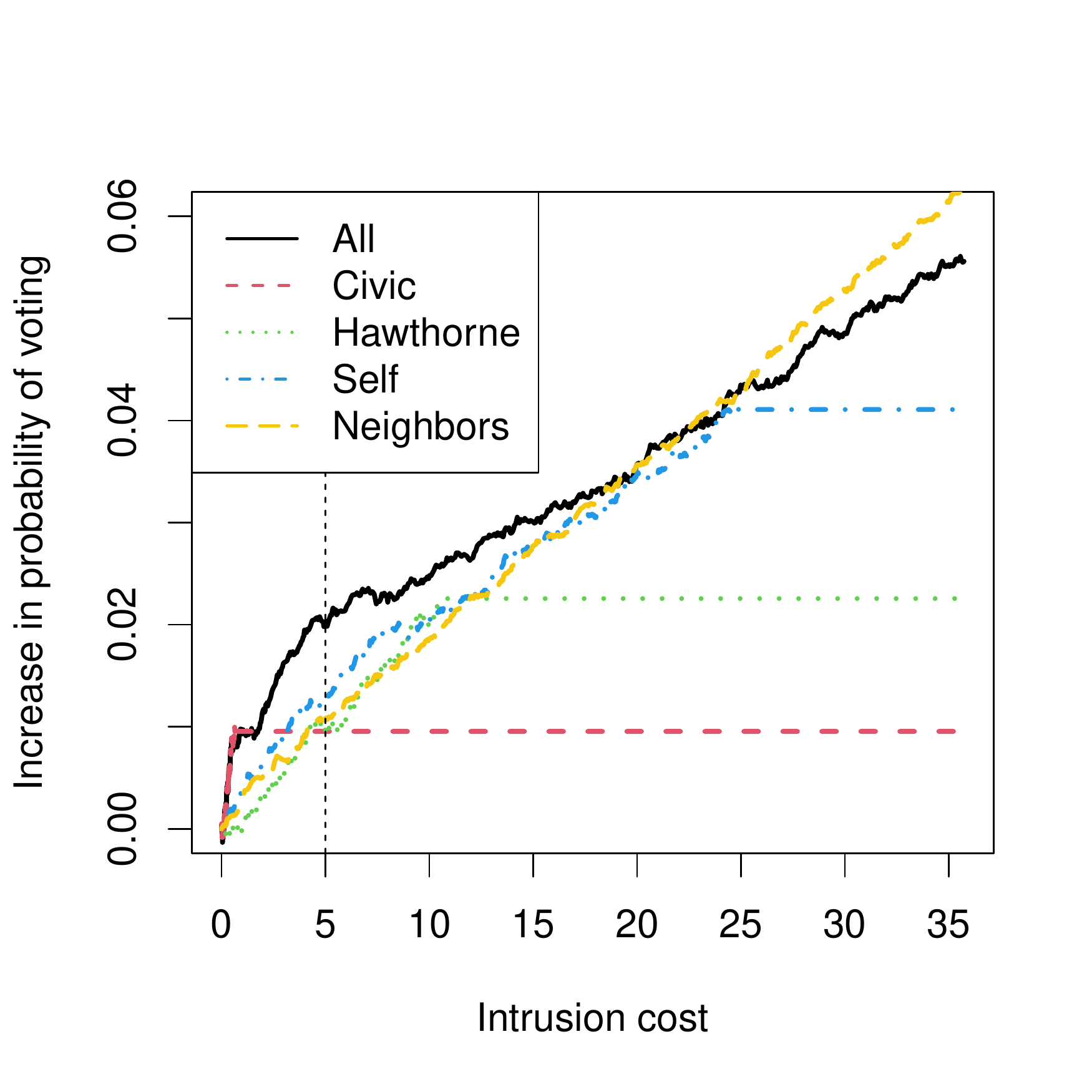}
        \caption[]
        {{Targeting with all arms vs. targeting with single arms.}}
        \label{fig:figure_voting_a}
    \end{subfigure}
    \hfill
    \begin{subfigure}[b]{0.475\textwidth}  
        \centering 
        \includegraphics[width=\textwidth]{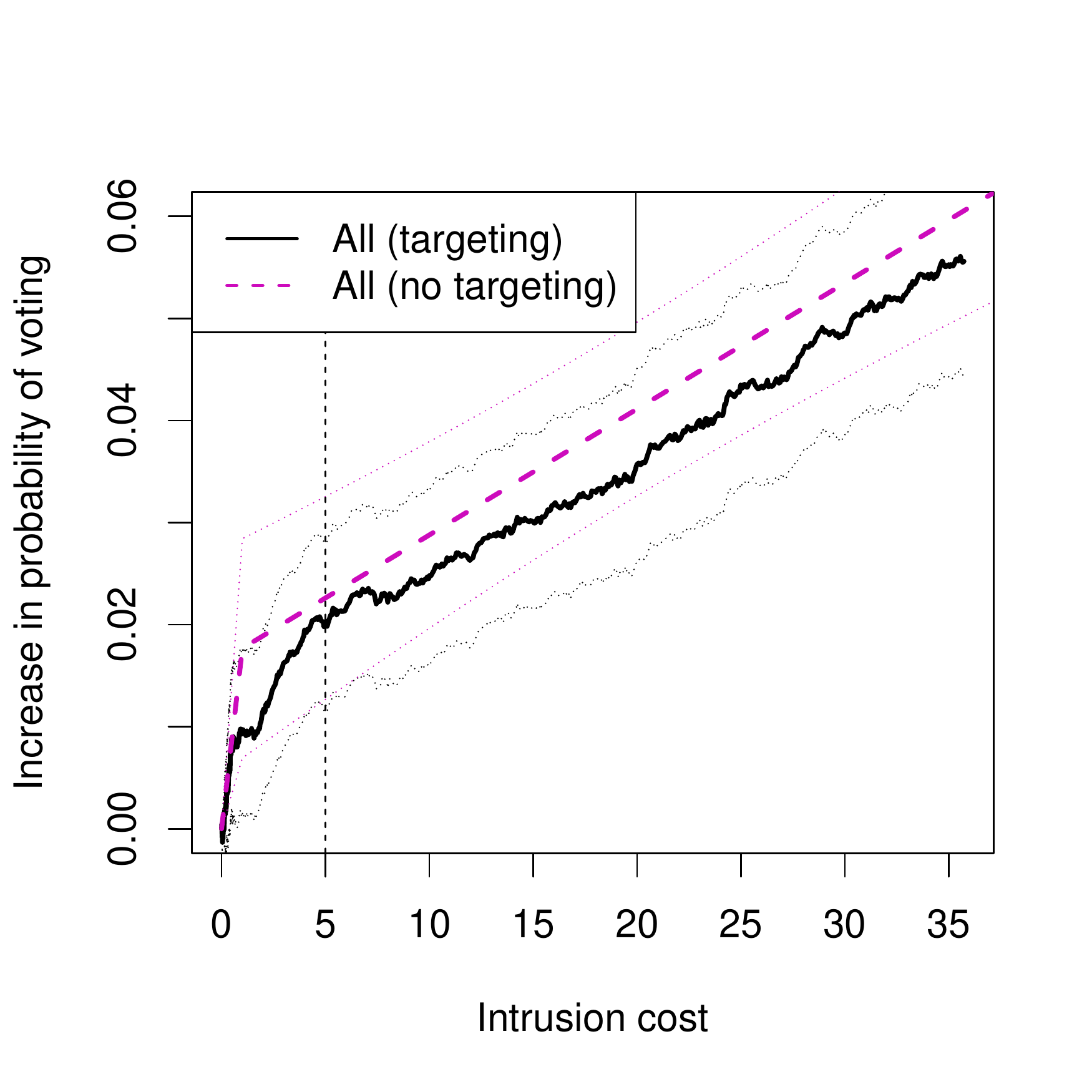}
        \caption[]
        {{All arms with targeting vs. all arms without targeting.}}  
        \label{fig:figure_voting_b}
    \end{subfigure}
    \caption{(a) Qini curves estimated on data from \citet{gerber2008social}, evaluated with inverse-propensity weighting using known randomization probabilities. (b) Qini curves for the multi-armed policy and the baseline multi-armed policy without targeting. The 95\% confidence intervals (dotted lines) shown in (b) are pointwise and are left out of (a) for legibility.}
\end{figure}

This treatment arm choice is intrusive, and to characterize the tradeoff between increases in voter turnout and incurred ``intrusions'', and to investigate whether targeting with less aggressive options might give similar increases in turnouts, we utilize Qini curves. The publicly available dataset from \citet{gerber2008social} includes variables that are associated with election turnout, and that we use to train a $\hat \tau(\cdot)$ function using the \texttt{grf} estimator described in Section \ref{sec:simulation}. These covariates include age, year of birth, gender, and household size, as well as six binary variables indicating if the subject voted in the general and primary elections in the years 2000, 2002, and 2004. To evaluate policies we hold out a random half-sample of the households and use inverse-propensity weighting with the known randomization probabilities $\PP{W_i = k} = 1/9~ (k=1\ldots4)$, where the control arm $k=0$ has assignment probability $5/9$. To incorporate costs, there are many modeling approaches one might take. In this example, we denominate costs in ``intrusion units'' where treatment arm 1 is least intrusive with $C_i(1) = 1$, then measure costs of the remaining arms as some multiple of this.

Here, we assume arm costs of $C_i(2)=15, C_i(3)=30, C_i(4)=45$, resulting in the Qini curves shown in Figure \ref{fig:figure_voting_a}. These curves suggest a benefit of multi-armed treatment rules over single-armed ones. For example, an optimal combination of the arms can at a budget level $B=5$ yield an increase in voter turnout of $2\%$ (95\% CI: $[1.2, 2.8]$) where only the least intrusive first arm  (``Civic'') would yield $1\%$ (95\% CI: $[0.1, 1.8]$). A paired test between the two policies at $B=5$ yields a 95\% CI of $[0.5, 1.5]$ for the difference in Qini curves.

Figure \ref{fig:figure_voting_b} compares the Qini curves from using all arms together with the non-targeting baseline policy $\bar \pi_B$, which in this case will only allocate between the non-intrusive arm 1 (``Civic) and the most intrusive but effective arm 4 (``Neighbors''). At $B=5$ a random 91\% of units are assigned arm 1 and the remaining 9\% arm 4 for a gain that is practically the same as for the targeting policy using all arms (95\% CI for the difference $Q_{1,2,3,4} - \widebar Q_{1,2,3,4}:~ [-0.9, 0.4]$).
Thus, in this application, it appears that the gains seen in Figure \ref{fig:figure_voting_a} are not primarily due to targeting---but rather to the flexibility offered by multi-armed policies in splitting budget between one arm that is cheaper but less effective and another that is more effective but also more expensive.

\section*{Acknowledgement}
We thank the Golub Capital Social Impact Lab at Stanford Graduate School of Business, and the Office of Naval Research (grants N00014-19-1-2468 and N00014-22-1-2668) for their financial support of this research. We are also grateful to Vitor Hadad and Emil Palikot for helpful feedback and to James Yang for helpful input on templatization in C++.

% \section*{Conflict of Interest Statement}
% The authors report there are no competing interests to declare.

\newpage
\bibliographystyle{plainnat}
\bibliography{bibliography}

\clearpage
\newpage
\appendix

\section{Algorithm Details}
\subsection{Computing the Upper Left Convex Hull}\label{sec:appendix_cvx_hull}
The reduction to convex hulls in Algorithm \ref{alg:algopath} in the function \texttt{ComputeConvexHull} can be done using a variant of the Graham scan \citep{graham1972efficient}. Consider treatment arms $h, j, l$ sorted according to costs $C_h(X_i) < C_j(X_i) < C_l(X_i)$. To construct the hull, start with the two least costly arms $h$ and $j$ added to the hull, then do a linear scan through the remaining arms in order of increasing cost and determine if the $j$-th arm should be kept or removed from the hull by checking if the slope (as defined in Figure \ref{fig:lp_dominated}) from $j$ to $l$ is larger than the slope from $h$ to $j$. If the slope is larger, $j$ is removed, otherwise, $j$ is kept. If all elements of $\hat \tau(X_i)$ are negative, the convex hull for that unit is defined to be empty.

\subsection{Time Complexity of Algorithm \ref{alg:algopath}}
Given $n$ test samples, the run time of computing the multi-armed policy path is $O(nK \log{nK} + nK \log K) = O(nK \log{nK})$. To see this, note that the worst-case run time of Algorithm \ref{alg:algopath} occurs when for every unit each arm lies on the convex hull, and the maximum budget exceeds the expected cost of the most costly arm, i.e. $B_{max} > \EE{C_{k_{m_x}(X_i)}(X_i)}$. The convex hulls then have total size $nK$, and since the final budget constraint will never bind, a total of $nK$ items have to be inserted into the priority queue, which takes time $O(nK \log{nK})$. Computing the convex hull involves sorting each unit's cost in increasing order, which takes time $O(K \log K)$, and this has to be repeated $n$ times, yielding the claimed run time.

\section{Proofs}\label{sec:Proofs}
\subsection{Proof of Theorem \ref{theo:opt_policy}}
\begin{proof}
 Assume $X$ is a random draw from the covariate distribution and $X_i$ are i.i.d. We first note that in our multi-armed case, the policy $\pi$ is a vector and the expected cost can be written as
 \begin{equation*}
     \Psi(\pi) = \EE{\sum_{k=1}^K \pi_k(X_i) C_k(X_i)}. 
 \end{equation*}
 Consider the following function of $\lambda$, 
 \begin{equation*}
     \beta(\lambda) = \EE{\sum_{j=1}^{m_x} \mathbf{1}\left(\rho_{k_j(x)}(x) > \lambda > \rho_{k_{j+1}(x)}(x)\right)C_{k_j(x)}(x)}.
 \end{equation*}
 By our assumption, we see it is a non-increasing function of $\lambda$. Let 
 \begin{equation}
     \eta_B := \inf \{\lambda: \beta(\lambda) \leq B\}, \, \lambda_B = \max \{\eta_B, 0\}.
 \end{equation}
 Then the policy \eqref{eq:pi_opt} could be rewritten as 
 \begin{equation} \label{eq:pi_opt_rewritten}
     \pi_{B, k_j(x)}^*(x) = 
\begin{cases}
    c &\text{if} \ \rho_{k_j(x)} = \lambda_B, \\
    1-c &\text{if} \ \rho_{k_{j-1}(x)} = \lambda_B, \\
    1 & \text{if} \ \rho_{k_j(x)}(x) > \lambda_B > \rho_{k_{j+1}(x)}(x)
\end{cases}
\end{equation}
where
\begin{equation} \label{eq:c_B}
     c = 
\begin{cases}
    0 &\text{if} \ \EE{\sum_{j=1}^{m_x} \mathbf{1}(\rho_{k_j(x)}(x) = \lambda_B )C_{k_j(x)}(x)} = 0, \\
    \frac{B-\EE{\sum_{j=1}^{m_x} \mathbf{1}\left(\rho_{k_j(x)}(x) > \lambda > \rho_{k_{j+1}(x)}(x)\right)C_{k_j(x)}(x)}}{\sum_{j=1}^{m_x} \mathbf{1}(\rho_{k_j(x)}(x) = \lambda_B )C_{k_j(x)}(x)} & \text{if} \ \EE{\sum_{j=1}^{m_x} \mathbf{1}(\rho_{k_j(x)}(x) = \lambda_B )C_{k_j(x)}(x)} > 0
\end{cases}
\end{equation}
Now we prove the above rule is in fact optimal. Let $\pi'(x)$ denote any other stochastic treatment rule that satisfies the budget constraint. We want to argue 
\begin{equation*}
\EE{\sum_{k=1}^{K}\pi_k(X)\tau_k(X)} \geq \EE{\sum_{k=1}^{K} \pi'_k(X)\tau_k(X)}.
\end{equation*}
To prove this, we have 
\begin{align}
   & \EE{\sum_{k=1}^{K}\pi_k(X)\tau_k(X)} - \EE{\sum_{k=1}^{K} \pi'_k(X)\tau_k(X)} \notag \\
   &= \EE{\EE{\sum_{k=1}^{K} (\pi_k(X)-\pi_k'(X))\tau_k(X) \mid X}} \notag \\
   &= \int \EE{\sum_{j=1}^{K} (\pi_{k_j(x)}(x) -\pi'_{k_j(x)}(x) )\tau_{k_j(x)}(x)} dP(x), \label{eq:proof_goal}
\end{align}
where we define $k_{m_x+1}(x),\ldots,k_K(x)$ to be any ordering of points not in the convex hull. 
Now we have 
\begin{align*}
    &\sum_{j=1}^{K} (\pi_{k_j(x)}(x) -\pi'_{k_j(x)}(x) )\tau_{k_j(x)}(x) \\
    &= \sum_{j=1}^{K}(\pi_{k_j(x)}(x) -\pi'_{k_j(x)}(x) ) \sum_{l=1}^{j}(\tau_{k_l(x)}(x)-\tau_{k_{l-1}(x)}(x)) \\
    & = \sum_{l=1}^{K} (\tau_{k_l(x)}(x)-\tau_{k_{l-1}(x)}(x)) \sum_{j=l}^{K}(\pi_{k_j(x)}(x) -\pi'_{k_j(x)}(x)) \\
    & = \sum_{l=1}^{K} (C_{k_l(x)}(x)-C_{k_{l-1}(x)}(x)) \frac{\tau_{k_l(x)}(x)-\tau_{k_{l-1}(x)}(x)}{C_{k_l(x)}(x)-C_{k_{l-1}(x)}(x)}\sum_{j=l}^{K}(\pi_{k_j(x)}(x) -\pi'_{k_j(x)}(x)) \\
    &= \sum_{l=1}^{K} \rho_{k_l(x)}(x)(C_{k_l(x)}(x)-C_{k_{l-1}(x)}(x)) \sum_{j=l}^{K}(\pi_{k_j(x)}(x) -\pi'_{k_j(x)}(x)),
\end{align*}
where we use the fact that we assume $\tau_0(x) = 0$. Note that by our characterization of the optimal policy, there exists $k \in \{1,\ldots,K\}$ such that, $\rho_{k_l(x)}(x) \geq \lambda_B$ if $l \leq k$. In these cases by the definition of our policy $\pi$, we either have $\sum_{j=l}^{K} \pi_{k_j(x)}(x) = 1 \geq \sum_{j=l}^{K} \pi'_{k_j(x)}(x)$ or $\rho_{k_l(x)}(x) = \lambda_B$. If $l > k$ then $\sum_{j=l}^{K} \pi_{k_j(x)}(x) = 0 \leq \sum_{j=l}^{K} \pi'_{k_j(x)}(x)$ and $\rho_{k_l(x)}(x) < \lambda_B$. Combining the two cases we see 
\begin{align*}
    &\sum_{j=1}^{K} (\pi_{k_j(x)}(x) -\pi'_{k_j(x)}(x) )\tau_{k_j(x)}(x) \\
    & \geq \sum_{l=1}^{K} \lambda_B(C_{k_l(x)}(x)-C_{k_{l-1}(x)}(x)) \sum_{j=l}^{K}(\pi_{k_j(x)}(x) -\pi'_{k_j(x)}(x)) \\
    & = \lambda_B \sum_{j=1}^{K} C_{k_j(x)} (\pi_{k_j(x)}(x) - \pi'_{k_j(x)}(x)).
\end{align*}
Hence, we have 
\begin{equation}
    \EE{\sum_{k=1}^{K}\pi_k(X)\tau_k(X)} - \EE{\sum_{k=1}^{K} \pi'_k(X)\tau_k(X)} \geq \lambda_B \EE{\sum_{k=1}^{K}(\pi_{k}(X) -\pi'_{k}(X))C_{k}(X)}. \label{eq:last_step}
\end{equation}
Now we consider two cases: Either $\lambda_B > 0$ or $\lambda_B = 0$. If $\lambda_B > 0$, we have consumed all budget then obviously $\eqref{eq:last_step} \geq 0$ and if $\lambda_B = 0$, then we are done as well.  
\end{proof}

\subsection{Proof of Theorem \ref{theo:asymp_linear_reward}}
\begin{proof}
We proceed with three steps. First we argue that $\lamh_B$ is consistent, i.e. $\lamh_B \convp \lambda_B$. Second, we argue that $n^{1/2}(\lamh_B - \lambda_B)$ is asymptotically linear. Finally, we argue that 
\[
n^{1/2}\left(\widehat V(T(\cdot \, ; \, \hat \rho, \, \hat \lambda_B, \, \hat c_B))- V(T(\cdot \, ; \, \hat \rho, \, \lambda_B, \, c_B))\right)
\]is asymptotic linear. \\\\
\textit{Step 1}: $\lamh_B \convp \lambda_B$.\\
    We use Theorem 5.9 of \cite{vaart_1998}. We need to verify the uniform convergence of 
    \begin{equation}
    \Psi_n(T(\cdot \, ; \, \hat \rho, \, \lambda, \, c)) - B = n^{-1} \sum_{i=1}^{n} \langle T(X_i; \, \hat \rho, \, \lambda, \, c),  C(X_{i}) \rangle - B
    \end{equation} to $\Psi(T(\cdot \, ; \, \hat \rho, \, \lambda, \, c)) - B$. We first prove a lemma.  
    \begin{lemm} \label{lemm:donsker}
Suppose $g_1$, $g_2$ and $h$ are measurable functions from $\mathbb{R}^d$ to $\mathbb{R}$ such that for any $x, \,h(x) \leq M$ and $g_1(x) > g_2(x)$, then the function class $\{f_{\lambda}(x):= \mathbf{1}(g_1(x) > \lambda > g_2(x))h(x), \lambda \in [0, L]\}$ is $P$-Donsker for any law $P$ on $\mathcal{X}$. 
\end{lemm}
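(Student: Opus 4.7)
The plan is to show that $\{f_\lambda : \lambda \in [0,L]\}$ is a VC-subgraph class of functions with a uniform constant envelope, from which the $P$-Donsker property for every law $P$ follows by standard uniform entropy arguments (see, e.g., van der Vaart and Wellner, 1996, Chapter 2.5--2.6).

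First I would factor $f_\lambda(x) = \mathbf{1}_{A_\lambda}(x)\, h(x)$ with $A_\lambda := \{x : g_1(x) > \lambda\} \cap \{x : g_2(x) < \lambda\}$. The family $\mathcal{C}_1 = \{\{g_1 > \lambda\} : \lambda \in \mathbb{R}\}$ is nested (totally ordered by inclusion), which is the canonical example of a VC-class of sets with VC-index at most $2$; the same holds for $\mathcal{C}_2 = \{\{g_2 < \lambda\} : \lambda \in \mathbb{R}\}$. By the permanence properties of VC classes under finite set-theoretic operations, the collection of intersections $\{A \cap B : A \in \mathcal{C}_1,\, B \in \mathcal{C}_2\}$ is again VC with a uniformly bounded index, and $\{A_\lambda\}_{\lambda \in [0,L]}$ is a subfamily of this collection, hence VC.

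Second, the induced class of indicators $\{\mathbf{1}_{A_\lambda} : \lambda \in [0,L]\}$ is VC-subgraph with envelope $1$. Pointwise multiplication by the fixed measurable function $h$ preserves the VC-subgraph property and yields an envelope bounded by the constant $M$ (since $|h|\leq M$), so $\{f_\lambda\}$ is VC-subgraph with a bounded envelope.

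Third, I would invoke the standard sufficient condition for Donsker: a VC-subgraph class with an envelope in $L^2(P)$ satisfies the uniform entropy integral condition and is therefore $P$-Donsker. Since the envelope here is a finite constant, it lies in $L^2(P)$ for every probability measure $P$, yielding the claim for every $P$. The only item requiring care is confirming the measurability conditions behind the uniform central limit theorem and the VC-preservation step under multiplication by a fixed bounded function; both are routine given that $g_1$, $g_2$, $h$ are measurable and $h$ is uniformly bounded, and can be cited from the standard references rather than rederived.
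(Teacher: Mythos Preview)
Your proposal is correct and follows essentially the same route as the paper: both arguments reduce the indicator $\mathbf{1}(g_1 > \lambda > g_2)$ to a simple combination of monotone-threshold indicators (the paper writes it as the difference $\mathbf{1}(g_1>\lambda)-\mathbf{1}(g_2>\lambda)$, you write it as the intersection $\{g_1>\lambda\}\cap\{g_2<\lambda\}$), observe that such classes are VC, and then use boundedness of $h$ to pass to the Donsker conclusion. Your write-up is more explicit about the permanence steps and the envelope, but the underlying argument is the same.
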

\begin{proof}
    We note that 
    \begin{equation}
        f_{\lambda}(x) = (\mathbf{1}(g_1(x) > \lambda) - \mathbf{1}(g_2(x) > \lambda))h(x).
    \end{equation}
The indicator functions are a VC class hence Donsker and $h(x)$ is uniformly bounded. Hence $f_{\lambda}$ is also Donsker. 
\end{proof}
By Lemma \ref{lemm:donsker} and the fact that the finite sum of a Donsker class is also Donsker, we know $\langle T(X_i; \, \hat \rho, \, \lambda, \, c),  C(X_{i}) \rangle - B$ indexed by $\lambda$ forms a Donsker class. In particular, it is Glivenko-Cantelli and the uniform convergence holds. Now we verify the second condition in the theorem. By our assumption, $\Psi(T(\cdot \, ; \, \hat \rho, \, \lambda, \, c)) - B$ is continuously differentiable, and also by our definition and assumptions, we know $\Psi(T(\cdot \, ; \, \hat \rho, \, \lambda, \, c)) - B$ is monotonically decreasing. In particular, it has a well-defined inverse. This verifies the second condition in the theorem. Finally, our $\lamh_B$ solves the estimating equation approximately, and by Theorem 5.9 of \cite{vaart_1998}, $\lamh_B$ is consistent. \\ \\
\textit{Step 2}: $n^{1/2}(\lamh_B - \lambda_B)$ is asymptotic linear. \\
We use Theorem 5.21 of \cite{vaart_1998}. To verify the convergence (5.22) in the proof, we use Lemma 19.24 and the following additional lemma. 
\begin{lemm} \label{lemm:l2_conv}
    Suppose $\lamh \convp \lambda$, and $f_{\lambda}$ is defined as in Lemma \ref{lemm:donsker}, then $\|f_{\lamh} - f_{\lambda}\|_2^2 \convp 0$.
\end{lemm}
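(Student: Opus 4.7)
The plan is to reduce the stochastic statement $\|f_{\lamh} - f_{\lambda}\|_2^2 \convp 0$ to a deterministic continuity statement and then invoke the continuous mapping theorem. Let $\Phi(a) := \|f_a - f_\lambda\|_2^2$, so $\Phi(\lambda) = 0$ and we are given $\lamh \convp \lambda$. It is enough to prove that $\Phi$ is continuous at the point $\lambda$; since the limit is a constant, the continuous mapping theorem then delivers $\Phi(\lamh) \convp \Phi(\lambda) = 0$.

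To check continuity of $\Phi$ at $\lambda$, I would start from
\[
\Phi(a) \;=\; \int \bigl[\mathbf{1}(g_2(x) < a < g_1(x)) - \mathbf{1}(g_2(x) < \lambda < g_1(x))\bigr]^2 h(x)^2 \, dP(x),
\]
and note that whenever $g_1(x) \neq \lambda$ and $g_2(x) \neq \lambda$, both indicators coincide once $a$ is close enough to $\lambda$, so the integrand tends pointwise to $0$ as $a \to \lambda$. The integrand is bounded uniformly by $4 M^2$, so dominated convergence gives $\Phi(a) \to 0$ provided the exceptional set $E_\lambda := \{x : g_1(x) = \lambda\} \cup \{x : g_2(x) = \lambda\}$ has $P$-measure zero.

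The main obstacle, and the only non-cosmetic step, is justifying $P(E_\lambda) = 0$; this does not follow from the generic hypotheses of Lemma~\ref{lemm:donsker} alone. In the use inside Theorem~\ref{theo:asymp_linear_reward}, however, $g_1$ and $g_2$ are consecutive incremental cost-benefit ratios along a convex hull and $\lambda$ equals the threshold $\lambda_B$. Assumption~\ref{assu:contin} already posits that the set of $X$ with $\hat \rho$ exactly $\lambda_B$ has measure zero, and the additional hypothesis in Theorem~\ref{theo:asymp_linear_reward} that $\hat\rho_{a_i}(x_i)$ has a continuous density in a neighborhood of $\lambda_B$ rules out atoms at $\lambda_B$. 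Together these give $P(E_{\lambda_B}) = 0$ and close the proof. As a sanity check, the continuity hypothesis is genuinely needed: if $g_1(X)$ had a point mass at $\lambda$, then $\Phi(a)$ would jump as $a$ crossed $\lambda$ and the conclusion would fail, so any proof must invoke something that kills this mass.
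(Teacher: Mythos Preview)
Your proposal is correct and matches the paper's approach: both arguments establish that $\Phi(a)=\|f_a-f_\lambda\|_2^2\to 0$ as $a\to\lambda$ via dominated convergence, and then pass from $\lamh\convp\lambda$ to $\Phi(\lamh)\convp 0$. The only cosmetic difference is that the paper spells out the subsequence characterization of convergence in probability (every subsequence has a further a.s.-convergent subsequence) rather than naming the continuous mapping theorem; your explicit treatment of the measure-zero condition $P(E_\lambda)=0$ is in fact more careful than the paper, which invokes dominated convergence without isolating this hypothesis.
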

\begin{proof}
   We note by dominated convergence theorem, if the sequence $\lambda_n \rightarrow \lambda$ almost surely, then $\|f_{\lambda_n} - f_{\lambda}\|_2^2 \rightarrow 0$ almost surely. Now fix a subsequence $n_k$, since $\hat{\lambda}_{n_k} \convp \lambda$, we know there is a further subsequence $n(m_k)$ such that $\hat{\lambda}_{n(m_k)} \rightarrow \lambda$ almost surely. Then by the above argument, $\|f_{\lambda_{n(m_k)}} - f_{\lambda}\|_2^2 \rightarrow 0$ almost surely, which establishes the convergence in probability since every subsequence has a further subsequence that converges almost surely.
\end{proof}
Since the function $\psi_{\lambda}(X_i) = \langle T(X_i; \, \hat \rho, \, \lambda, \, c),  C(X_{i}) \rangle - B$ when viewed as indexed by $\lambda$ is a finite sum of functions of the form $f_{\lambda}$, the above lemma also holds for $\psi_{\lambda}$. By Lemma 19.24 of \cite{vaart_1998}, we know 
\begin{equation}
    \mathbb{G}_n\psi_{\lamh_B} - \mathbb{G}_n\psi_{\lambda_B} \convp 0.
\end{equation}
To apply Theorem 5.21 we also need to show that the derivative of $\Psi(T(\cdot \, ; \, \hat \rho, \, \lambda, \, c))$ is nonzero at $\lambda_B$. To prove this, for simplicity, we assume there is only one treatment arm in addition to the control arm, then we have 
\begin{equation}
     \Psi'(T(\cdot \, ; \, \hat \rho, \, \lambda, \, c)) = p(\lambda)\EE{C_i(1) - C_i(0) \mid \hat \rho_i = \lambda},
\end{equation}
where $p$ is the density function of the incremental ratio $\hat \rho_i$. By our assumption on the density of $\hat \rho$, we know this is greater than zero. Finally by our boundedness assumption, $\psi_{\lambda}$ is $L_2$, by \eqref{eq:approxlambda}, $\lamh_B$ approximately solves the estimating equation by $o_p(n^{-1/2})$ and $\lamh_B$ is consistent by step 1. By Theorem 5.21 of \cite{vaart_1998}, we have 
\begin{equation}
    n^{1/2}(\lamh_B-\lambda_B) = -\frac{1}{\Psi'(\pi_B)}n^{-1/2}\sum_{i=1}^{n}\psi_{\lambda_B}(X_i) + o_p(1).
\end{equation}
\textit{Step 3}: $n^{1/2}\left(\widehat V(T(\cdot \, ; \, \hat \rho, \, \hat \lambda_B, \, \hat c_B))- V(T(\cdot \, ; \, \hat \rho, \, \lambda_B, \, c_B))\right)$ is asymptotic linear.\\
Define $\tV(T(\cdot \, ; \, \hat \rho, \, \lambda, \, c)) = n^{-1}\sum_{i=1}^{n} \langle T(X_i; \, \hat \rho, \, \lambda \, ,c), \tau(X_i)\rangle$ and recall \\$V(T(\cdot \, ; \, \hat \rho, \, \lambda, \, c)) = \EE{\langle T(X_i; \, \hat \rho, \, \lambda, \, c), \tau(X_i) \rangle}$. We have the following decomposition
\begin{align}
    & n^{1/2}\left(\widehat V(T(\cdot \, ; \, \hat \rho, \, \hat \lambda_B, \, \hat c_B))- V(T(\cdot \, ; \, \hat \rho, \, \lambda_B, \, c_B))\right) \notag \\
    & =  n^{1/2}\left(\widehat V(T(\cdot \, ; \, \hat \rho, \, \hat \lambda_B, \, \hat c_B)) - \tV(T(\cdot \, ; \, \hat \rho, \, \hat \lambda_B, \, \hat c_B))\right) \label{term1} \\
    & + n^{1/2}\left(\tV(T(\cdot \, ; \, \hat \rho, \, \hat \lambda_B, \, \hat c_B)) - V(T(\cdot \, ; \, \hat \rho, \, \hat \lambda_B, \, \hat c_B))\right) \label{term2}\\
    & + n^{1/2}\left(V(T(\cdot \, ; \, \hat \rho, \, \hat \lambda_B, \, \hat c_B)) - V(T(\cdot \, ; \, \hat \rho, \, \lambda_B, \, c_B))\right). \label{term3}
\end{align}
We will deal with the three terms one by one. For \eqref{term1}, we have 
\begin{align}
    & n^{1/2}\left(\widehat V(T(\cdot \, ; \, \hat \rho, \, \hat \lambda_B, \, \hat c_B)) - \tV(T(\cdot \, ; \, \hat \rho, \, \hat \lambda_B, \, \hat c_B))\right) \notag \\
    &= n^{-1/2}\sum_{i=1}^n \langle T(X_i; \, \hat \rho, \, \hat \lambda_B, \, \hat c_B), \widehat \Gamma_i - \tau(X_i)\rangle \\
    & = n^{-1/2}\sum_{i=1}^n \langle T(X_i; \, \hat \rho, \, \hat \lambda_B, \, \hat c_B), \Gamma_i - \tau(X_i)\rangle + o_p(1) \label{eq:DR}\\
    & = n^{-1/2}\sum_{i=1}^n \langle T(X_i; \, \hat \rho, \, \lambda_B, \, c_B), \Gamma_i - \tau(X_i)\rangle + o_p(1), \label{eq:cmt}
\end{align}
where \eqref{eq:DR} follows from the boundedness of $T(\cdot \, ; \, \hat \rho, \, \lambda, \, c)$ and the usual analysis on doubly robust scores which gives for any $k$, 
\begin{equation}
    n^{-1/2} \sum_{i=1}^{n} (\widehat \Gamma_{i,k} - \Gamma_{i,k}) = o_p(1).
\end{equation}
To get \eqref{eq:cmt}, we note that we only need to prove 
\begin{equation} \label{goal:diff}
    n^{-1/2}\sum_{i=1}^n \langle T(X_i; \, \hat \rho, \, \hat \lambda_B, \, \hat c_B) - T(X_i; \, \hat \rho, \, \lambda_B, \, c_B), \Gamma_i - \tau(X_i)\rangle = o_p(1).
\end{equation}
To this end, note that \eqref{goal:diff} is mean 0 and we argue that the variance goes to 0. 
\begin{align*}
    & \EE{\left(n^{-1/2}\sum_{i=1}^n \langle T(X_i; \, \hat \rho, \, \hat \lambda_B, \, \hat c_B) - T(X_i; \, \hat \rho, \, \lambda_B, \, c_B), \Gamma_i - \tau(X_i)\rangle\right)^2} \\
    & = \EE{\EE{\left(n^{-1/2}\sum_{i=1}^n \langle T(X_i; \, \hat \rho, \, \hat \lambda_B, \, \hat c_B) - T(X_i; \, \hat \rho, \, \lambda_B, \, c_B), \Gamma_i - \tau(X_i)\rangle\right)^2 \mid X_{train}, X_{test}}} \\
    & = \EE{n^{-1}\sum_{i=1}^{n} \EE{\langle T(X_i; \, \hat \rho, \, \hat \lambda_B, \, \hat c_B) - T(X_i; \, \hat \rho, \, \lambda_B, \, c_B), \Gamma_i - \tau(X_i)\rangle^2 \mid X_{train}, X_{test}}} \\
    & \leq M \EE{n^{-1}\sum_{i=1}^n\|T(X_i; \, \hat \rho, \, \hat \lambda_B, \, \hat c_B) - T(X_i; \, \hat \rho, \, \lambda_B, \, c_B)\|_2^2} \\
    &= M \EE{\|T(X_i; \, \hat \rho, \, \hat \lambda_B, \, \hat c_B) - T(X_i; \, \hat \rho, \, \lambda_B, \, c_B)\|_2^2},
\end{align*}
where $M$ is a bound on the second moment of $\Gamma_i$. By the dominated convergence theorem or lemma \ref{lemm:l2_conv}, we know 
\begin{equation*}
    \EE{\|T(X_i; \, \hat \rho, \, \hat \lambda_B, \, \hat c_B) - T(X_i; \, \hat \rho, \, \lambda_B, \, c_B)\|_2^2} = o(1).
\end{equation*}
%we note that condition on the test set, only $\lamh$ is random. The function $\langle \pi(X_i;\lambda), \Gamma_i^* - \tau(X_i)\rangle$ as a function of $\lambda$ is continuous if $\lambda$ is not on the boundary of the indicator functions. By our assumption, $\mathbb{P}(\rho_{x(i)}(X) = \lambda^*) = 0$. Hence, by continuous mapping theorem we have conditional convergence in probability, which then implies unconditional convergence. 
For \eqref{term2}, we can use the same machinery (Lemma 19.24 of \cite{vaart_1998}) as in step 2 to argue that 
\begin{align}
    & n^{1/2}\left(\tV(T(\cdot \, ; \, \hat \rho, \, \hat \lambda_B, \, \hat c_B)) - V(T(\cdot \, ; \, \hat \rho, \, \hat \lambda_B, \, \hat c_B))\right) \notag \\
    & = n^{1/2}\left(\tV(T(\cdot \, ; \, \hat \rho, \, \lambda_B, \, c_B)) - V(T(\cdot \, ; \, \hat \rho, \, \lambda_B, \, c_B))\right) + o_p(1).
\end{align}
Finally for \eqref{term3}, by differentiability of $V$ and step 2, we can use the delta method to get 
\begin{align}
    & n^{1/2}\left(V(T(\cdot \, ; \, \hat \rho, \, \hat \lambda_B, \, \hat c_B)) - V(T(\cdot \, ; \, \hat \rho, \, \lambda_B, \, c_B))\right) \notag \\
    & = n^{1/2}V'(T(\cdot \, ; \, \hat \rho, \, \lambda_B, \, c_B))(\hat \lambda_B - \lambda_B) + o_p(1). 
\end{align}
Now combine all three terms, and we have the following 
\begin{align}
    & n^{1/2}\left(\widehat V(T(\cdot \, ; \, \hat \rho, \, \hat \lambda_B, \, \hat c_B))- V(T(\cdot \, ; \, \hat \rho, \, \lambda_B, \, c_B))\right) \notag \\
    &= n^{-1/2}\sum_{i=1}^n \langle T(X_i; \, \hat \rho, \, \lambda_B, \, c_B), \Gamma_i - \tau(X_i)\rangle \\ 
    & + n^{-1/2}\sum_{i=1}^n (\langle T(X_i; \, \hat \rho, \, \lambda_B, \, c_B), \tau(X_i)\rangle - V(T(\cdot \, ; \, \hat \rho, \, \lambda_B, \, c_B)) \\
    & -\frac{V'(T(\cdot \, ; \, \hat \rho, \, \lambda_B, \, c_B))}{\Psi'(T(\cdot \, ; \, \hat \rho, \, \lambda_B, \, c_B))}n^{-1/2}\sum_{i=1}^{n}\psi_{\lambda}(X_i) + o_p(1) \\
    & = n^{-1/2} \sum_{i=1}^n \left(\langle T(X_i; \, \hat \rho, \, \lambda_B, \, c_B), \Gamma_i\rangle -  \frac{V'(T(\cdot \, ; \, \hat \rho, \, \lambda_B, \, c_B))}{\Psi'(T(\cdot \, ; \, \hat \rho, \, \lambda_B, \, c_B))}\psi_{\lambda}(X_i) - V(T(\cdot \, ; \, \hat \rho, \, \lambda_B, \, c_B)) \right)+ o_p(1)
\end{align}
\end{proof}

\section{Coverage Simulation with Boosting-based AIPW}\label{appendix:sim}
\setcounter{table}{0}
\renewcommand\thetable{\thesection\arabic{table}}

\begin{table}[ht]
\begin{center}
\begin{tabular}{c|rrrrrrrrrr}
      \multicolumn{6}{r}{Spend ($B$)}\\
      Sample size & 0.05 & 0.1 & 0.15 & 0.2 & 0.25 & 0.3 & 0.35 & 0.4 & 0.45 & 0.5 \\
      \hline
1000 & 0.93 & 0.91 & 0.91 & 0.91 & 0.90 & 0.90 & 0.88 & 0.88 & 0.88 & 0.87 \\ 
  2000 & 0.93 & 0.94 & 0.92 & 0.92 & 0.91 & 0.90 & 0.90 & 0.90 & 0.89 & 0.88 \\ 
  5000 & 0.93 & 0.92 & 0.92 & 0.91 & 0.92 & 0.91 & 0.90 & 0.90 & 0.89 & 0.90 \\ 
  10000 & 0.92 & 0.94 & 0.93 & 0.92 & 0.93 & 0.92 & 0.91 & 0.91 & 0.91 & 0.90 \\ 
    \end{tabular}
\caption{Coverage (\%) of the 95\% confidence intervals for $Q(B)$ at ten spend points ($B$) using the simulation setup described in Section \ref{sec:simulation}, with AIPW nuisance components \eqref{eq:drAIPW} estimated using \texttt{xgboost} \citep{xgboostR} (with tuning parameters selected via simple 3-fold cross validation over a randomly drawn grid). The number of Monte Carlo repetitions is 1000. The number of bootstrap replicates for standard error estimation is 200.}
\label{tab:CI_boosting}
\end{center}
\end{table}

\end{document}